\newcommand{\decprob}[3]{
	\begin{center}
		\begin{minipage}{0.96\linewidth}
			\noindent
			\textsc{#1}
   \smallskip
			\begin{compactdesc}
			 \item[\textbf{Input:}]  #2
			 \item[\textbf{Question:}]  #3
			\end{compactdesc}
		\end{minipage}
	\end{center}
}
\newcommand{\optprob}[3]{
	\begin{center}
		\begin{minipage}{0.96\linewidth}
			\noindent
			\textsc{#1}
   \smallskip
			\begin{compactdesc}
			 \item[\textbf{Input:}]  #2
			 \item[\textbf{Task:}]  #3
			\end{compactdesc}
		\end{minipage}
	\end{center}
}
\newcommand{\probweighted}{$P\mid r_j, p_j = p \mid\sum w_j U_j$\xspace}
\newcommand{\probunweighted}{$P\mid r_j, p_j = p \mid\sum U_j$\xspace}
\newcommand{\appref}[1]{{\hyperref[proof:#1]{\appsymb}}}
\newcommand{\appLink}[1]{{\hyperref[#1]{\appsymb}}}
\newcommand{\appendixsection}[1]{%
\gappto{\appendixProofs}{
\section{Additional Material for~\cref{#1}}\label{app:#1}
} }
\title{Minimizing the Number of Tardy Jobs with Uniform Processing Times on Parallel Machines} 
\titlerunning{Minimizing the Number of Tardy Jobs with Uniform Processing Times} 
\author{Klaus Heeger}{Department of Industrial Engineering and Management, Ben-Gurion~University~of~the~Negev, 
	Beer-Sheva, 
	Israel}{heeger@post.bgu.ac.il}{https://orcid.org/0000-0001-8779-0890}{Supported by the ISF, grant No.~1070/20.}
\author{Hendrik~Molter}{Department of Computer Science, Ben-Gurion~University~of~the~Negev, 
	Beer-Sheva, 
	Israel}{molterh@post.bgu.ac.il}{https://orcid.org/0000-0002-4590-798X}{Supported by the European Union's Horizon Europe research and innovation programme under grant agreement 949707.}
\authorrunning{K. Heeger and H. Molter} 
\keywords{Scheduling, Identical Parallel Machines, Weighted Number of Tardy Jobs, Uniform Processing Times, Release Dates, NP-hard Problems, Parameterized Complexity} 
\begin{document}

\maketitle

\begin{abstract}
In this work, we study the computational (parameterized) complexity of \probweighted. Here, we are given $m$ identical parallel machines and $n$ jobs with equal processing time, each characterized by a release date, a due date, and a weight. The task is to find a feasible schedule, that is, an assignment of the jobs to starting times on machines, such that no job starts before its release date and no machine processes several jobs at the same time, that minimizes the weighted number of tardy jobs. A job is considered tardy if it finishes after its due date.

Our main contribution is showing that \probunweighted (the unweighted version of the problem) is NP-hard and W[2]-hard when parameterized by the number of machines. The former resolves an open problem in Note 2.1.19 by Kravchenko and Werner~[Journal of Scheduling, 2011] and Open Problem 2 by Sgall~[ESA, 2012], and the latter resolves Open Problem 7 by Mnich and van Bevern~[Computers \& Operations Research, 2018].
Furthermore, our result shows that the known XP-algorithm for \probweighted parameterized by the number of machines is optimal from a classification standpoint. 

On the algorithmic side, we provide alternative running time bounds for the above-mentioned known XP-algorithm. Our analysis shows that \probweighted is contained in XP when parameterized by the processing time, and that it is contained in FPT when parameterized by the combination of the number of machines and the processing time. Finally, we give an FPT-algorithm for \probweighted parameterized by the number of release dates or the number of due dates. 
With this work, we lay out the foundation for a systematic study of the parameterized complexity of~\probweighted.
\end{abstract}


\section{Introduction}
\label{sec:intro}

Machine scheduling is one of the most fundamental application areas of combinatorial optimization~\cite{pinedo2012scheduling}. In a typical scheduling problem, the task is to assign jobs to machines with the goal of maximizing a certain optimization objective while complying with certain constraints.
Jobs are usually characterized by a \emph{processing time}, a \emph{release date}, a \emph{due date}, and a \emph{weight} (or a subset thereof). We consider the setting where we have access to several identical parallel machines that can each process one job (non-preemptively) at a time. One of the most fundamental optimization objectives is to minimize the weighted number of tardy jobs, where a job is considered \emph{tardy} if it is completed after its due date.

The arguably simplest scheduling problem aims to minimize the (unweighted) number of tardy jobs on a single machine, where all jobs are released at time zero. In the standard three-field notation for scheduling problems by Graham~\cite{Graham1969}, this problem is called $1\mid \mid \sum U_j$. It can be solved in polynomial time by a classic algorithm of Moore~\cite{Moore68}. However, this problem becomes NP-hard when weights are introduced, the number of machines is increased, or release dates are added.
\begin{itemize}
    \item The weighted version, $1\mid \mid \sum w_j U_j$, is one of the first scheduling problems
shown to be (weakly) NP-hard. It remains hard even if all jobs have the same due date, as in this case, it is equivalent to the well-known \textsc{Knapsack} problem, which was already included in Karp’s famous list of 21 NP-hard problems~\cite{Kar72}. The problem can be solved in pseudopolynomial time with a classic algorithm by Lawler and Moore~\cite{LawlerMoore}.
\item Adding a second machine leads to the problem $2\mid \mid \sum_j U_j$, which is (weakly) NP-hard as it is a generalization of the well-known \textsc{Partition} problem, which was also already included in Karp’s list of 21 NP-hard problems~\cite{Kar72}. If the number of machines is unbounded, the problem is called $P\mid \mid \sum_j U_j$ and it is strongly NP-hard even if all jobs have the same due date, as it generalizes the well-known \textsc{Bin Packing} problem~\cite{GJ79}.
\item Introducing release times leads to the problem $1\mid r_j\mid \sum_j U_j$, which is weakly NP-hard~\cite{LenstraRinnooy-Kan77}, even if there are only two different release dates and two different due dates. The reduction of Lenstra et al.~\cite{LenstraRinnooy-Kan77} can be extended in a straightforward way to show that $1\mid r_j \mid \sum U_j $ is strongly NP-hard.
\end{itemize}

\subparagraph{Problem Setting and Motivation.} We consider the case where release dates and weights are present and we have multiple identical parallel machines. However, we add the restriction that all processing times are the same. This problem is called \probweighted, we give a formal definition in \cref{sec:prelims}. This problem arises naturally in manufacturing systems, where 
\begin{itemize}
    \item exact specifications by the customers for the product have negligible influence on the production time, but
    \item the specifications only become available at certain times and customers request the product to meet certain due dates.
\end{itemize}

As an illustrative example, consider the problem of scheduling burn-in operations in
integrated circuit manufacturing. The specification of the layout of the circuit may only become available at a certain time, as it takes time to optimize it. At the same time, the specific layout has little to no influence on the processing time of the burn-in operation~\cite{liu2009bicriterion}. Furthermore, the customer may wish to have the circuit delivered at a certain due date.

To the best of our knowledge, the only known algorithmic result for \probweighted is a polynomial-time algorithm by Baptiste et al.~\cite{baptiste2000scheduling,baptiste2004ten} for the case where the number of machines is a constant. 
However, two special cases are known to be polynomial-time solvable. It is folklore that the case without release dates, $P\mid p_j=p\mid \sum_j w_j U_j$, and the case where the processing times equal one, $P\mid r_j,p_j=1\mid \sum_j w_j U_j$, can both be reduced to the \textsc{Linear Assignment} problem in a straightforward manner. The \textsc{Linear Assignment} problem is known to be solvable in polynomial time~\cite{kuhn1955hungarian}.
Furthermore, it is known that, given an instance of \probweighted, we can check in polynomial time whether all jobs can be scheduled such that \emph{no} job is tardy~\cite{brucker2008scheduling,simons1983multiprocessor,simons1989fast}.

\subparagraph{Our Contribution.} The complexity status of \probweighted and its unweighted version \probunweighted was a longstanding open problem. Kravchenko and Werner~\cite{kravchenko2011parallel} pointed out that this question remains unanswered in Note 2.1.19 and Sgall~\cite{Sgall12} listed this issue as Open Problem~2. Our main contribution is to resolve the complexity status of \probunweighted (and hence also \probweighted) by showing the following.
\begin{itemize}
    \item \probunweighted is NP-hard.
\end{itemize}

 Having established NP-hardness, we focus on studying the parameterized complexity of \probweighted. As mentioned before, Baptiste et al.~\cite{baptiste2000scheduling,baptiste2004ten} showed that the problem is in XP when parameterized by the number of machines. Mnich and van Bevern~\cite{mnich2018parameterized} asked in Open Problem 7 whether this result can be improved to an FPT algorithm. We answer this question negatively by showing the following.
\begin{itemize}
    \item \probunweighted is W[2]-hard when parameterized by the number of machines.
\end{itemize}

On the positive side, we give several new parameterized tractability results. By providing an alternative running time analysis of the algorithm for \probweighted Baptiste et al.~\cite{baptiste2000scheduling,baptiste2004ten}, we show the following.
\begin{itemize}
    \item \probweighted is in XP when parameterized by the processing time.
    \item \probweighted is in FPT when parameterized by the combination of the number of machines and the processing time.
\end{itemize}

Finally, we give a new algorithm based on a mixed integer linear program (MILP) formulation for \probweighted. We show the following.
\begin{itemize}
    \item \probweighted is in FPT when parameterized by the number of different release dates.
    \item \probweighted is in FPT when parameterized by the number of different due dates.
\end{itemize}

We conclude by pointing to new future research directions. Most prominently, we leave open whether \probweighted is in FPT or W[1]-hard when parameterized by the processing time.

\subparagraph{Related Work.} 
We give an overview of the literature investigating the parameterized complexity of minimizing the weighted number of tardy jobs in various related settings.

The problem of minimizing the weighted number of tardy jobs on a single machine, $1\mid \mid \sum_j w_j U_j$ has been extensively studied in the literature under various aspects and constraints. Hermelin et al.~\cite{hermelin2023minimizing} showed that the classical pseudopolynomial time algorithm by Lawler and Moore~\cite{LawlerMoore} can be improved in several special cases.
Hermelin et al.~\cite{HermelinKPS21} give an overview of the parameterized complexity of $1\mid \mid \sum_j w_j U_j$ with respect to the parameters ``number of processing times'', ``number of due dates'', and ``number of weights'' (and their combinations).
In particular, $1\mid \mid\sum w_j U_j$ is in XP when parameterized by the number of different processing times~\cite{HermelinKPS21}. This presumably cannot be improved to an FPT result as recently, it was shown that $1\mid \mid\sum w_j U_j$ parameterized by the number of different processing times is W[1]-hard~\cite{HeegerH24}. Faster algorithms are known for the case where the job weights equal the processing times~\cite{bringmann2022faster} and the problem has also been studied under fairness aspects~\cite{heeger2023equitable}.

Minimizing the weighted number of tardy jobs on parallel machines has mostly been studied in the context of interval scheduling ($P\mid d_j-r_j=p_j \mid \sum_j w_j U_j$) and generalizations thereof~\cite{arkin1987scheduling,hermelin2024parameterized,krumke2011interval,sung2005maximizing}.

The setting where processing times are assumed to be uniform has been studied under various optimization criteria (different from minimizing the weighted number of tardy jobs) and constraints. For an overview, we refer to Kravchenko and Werner~\cite{kravchenko2011parallel} and Baptiste et al.~\cite{baptiste2004ten}. The even more special case of unit processing times has also been extensively studied. Reviewing the related literature in this setting is out of scope for this work.

\section{Preliminaries}\label{sec:prelims}
\subparagraph{Scheduling.} Using the standard three-field notation for scheduling problems by Graham~\cite{Graham1969}, the problem considered in this work is called \probweighted.
In this problem, we have $n$ jobs and $m$ machines.
Each machine can process one job at a time. Generally, we use the variable $j$ to denote jobs and the variable $i$ to denote machines.
Each job $j$ has a \emph{processing time} $p_j=p$, a \emph{release date} $r_j$, a \emph{due date} $d_j$, and a \emph{weight} $w_j$, where we $p$, $r_j$, $d_j$, and $w_j$ are nonnegative integers. We use $r_\#$, $d_\#$, and $w_\#$ to denote the number of different release dates, due dates, and weights, respectively. 

A schedule maps each job~$j$ to a combination of a machine~$i$ and a starting time~$t$, indicating that $j$ shall be processed on machine $i$ starting at time~$t$.
More formally, a \emph{schedule} is a function $\sigma : \{1,\ldots,n\}\rightarrow 
\{1,\ldots,m\}\times \mathbb{N}$. If for job $j$ we have $\sigma(j)=(i,t)$, then job $j$ is scheduled to be processed on machine $i$ starting at time $t$ until time $t+p$. 
A schedule $\sigma$ is \emph{feasible} if there is no job $j$ with $\sigma(j)=(i,t)$ and $t<r_j$ and if there is no pair of jobs~$j,j'$ with $\sigma(j)=(i,t)$ and $\sigma(j')=(i,t')$ such that $|t-t'|<p$. 
We say that a job $j$ is \emph{early} in a feasible schedule $\sigma$ if $\sigma(j)=(i,t)$ and $t+p\le d_j$, otherwise we say that job $j$ is \emph{tardy}. We say that machine $i$ is \emph{idle} at time $t$ in a feasible schedule $\sigma$ if there is no job $j$ with $\sigma(j)=(i,t')$ and $t'\le t\le t'+p$.
The goal is to find a feasible schedule that maximizes the weighted number of early jobs $W=\sum_{j\mid \sigma(j)=(i,t) \wedge t+p\le d_j}w_j$. We call a feasible schedule that maximizes the weighted number of early jobs \emph{optimal}. Formally, the problem is defined as follows.

\optprob{\probweighted}{A number $n$ of jobs, a number $m$ of machines, a processing time $p$, a set of release dates $\{r_1,r_2,\ldots,r_n\}$, a set of due dates $\{d_1,d_2,\ldots,d_n\}$, and a set of weights~$\{w_1,w_2,\ldots,w_n\}$.}{Find a feasible schedule $\sigma$ that maximizes $W=\sum_{j\mid \sigma(j)=(i,t) \wedge t+p\le d_j}w_j$.}

We use \probunweighted to denote the unweighted (or, equivalently, uniformly weighted) version of \probweighted, that is, the case where $w_j = w_{j'}$ for every two jobs~$j $ and $j'$, or equivalently, $w_\#=1$.

Given an instance $I$ of \probweighted, we can make the following observation. 
\begin{observation}\label{obs:releaseduedate}
    Let $I$ be an instance of \probweighted and let $d_{\max}$ be the largest due date of any job in $I$. Let $I'$ be the instance obtained from $I$ by setting $r'_j=d_{\max}-d_j$ and $d'_j=d_{\max}-r_j$. Then $I$ admits a feasible schedule where the weighted number of early jobs is~$W$ if and only if $I'$ admits a feasible schedule where the weighted number of early jobs is $W$.
\end{observation}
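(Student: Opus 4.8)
The plan is to use a time-reversal (reflection) argument. Intuitively, a feasible schedule for $I$ occupies, for each early job $j$ scheduled at $(i,t)$, the time interval $[t,t+p]$ on machine $i$; I reflect the whole time axis about the point $d_{\max}/2$, sending this interval to $[d_{\max}-t-p,\ d_{\max}-t]$ on the same machine, i.e.\ I replace the starting time $t$ by $d_{\max}-t-p$. Under this reflection the new starting time is governed by the old due date and the new completion time by the old release date, which is exactly why the transformation $r'_j=d_{\max}-d_j$, $d'_j=d_{\max}-r_j$ swaps the roles of release and due dates. Since the map $t\mapsto d_{\max}-t-p$ is an involution, the two directions of the equivalence are symmetric, so I mainly need to carry out one of them in detail.

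Concretely, given a feasible schedule $\sigma$ for $I$ whose early jobs have total weight $W$, let $E$ be the set of early jobs and define $\sigma'(j)=(i,\,d_{\max}-t-p)$ for every $j\in E$ with $\sigma(j)=(i,t)$. I then verify the feasibility conditions and that $E$ stays early. First, the new starting time is a nonnegative integer: since $j$ is early, $t+p\le d_j\le d_{\max}$, so $d_{\max}-t-p\ge 0$. Second, the release constraint in $I'$ holds, since $d_{\max}-t-p\ge r'_j=d_{\max}-d_j$ is equivalent to $t+p\le d_j$, which holds because $j$ is early in $\sigma$. Third, the reflection is an isometry of the time axis, so two jobs on the same machine with $|t-t'|\ge p$ keep distance at least $p$, and thus no overlaps are created. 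Finally, $j$ remains early in $\sigma'$: its new completion time is $(d_{\max}-t-p)+p=d_{\max}-t\le d_{\max}-r_j=d'_j$, using $t\ge r_j$ from feasibility of $\sigma$.

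It remains to place the jobs not in $E$, which I simply schedule ``far in the future'': stagger them one after another on a single machine, each starting after its own release date $r'_j$ and after all jobs of $E$ have finished. This keeps $\sigma'$ feasible and makes all these jobs tardy, so the early jobs of $\sigma'$ are exactly $E$ and their total weight is exactly $W$. The converse implication follows from the same involutive map $t\mapsto d_{\max}-t-p$ applied to a feasible schedule of $I'$, with the four checks above repeated verbatim after interchanging the roles of $r_j,d_j$ and $r'_j,d'_j$. I do not expect a genuine obstacle: the only points needing care are ensuring the reflected starting times are nonnegative and satisfy the swapped release-date constraints, both of which are handled precisely by reflecting about $d_{\max}$ and reflecting only the early jobs, while the tardy jobs, having no upper bound on their placement, can always be deferred and never cause difficulty.
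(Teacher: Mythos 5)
Your proposal is correct and takes essentially the same approach as the paper, which proves the observation via exactly the reflection $\sigma'(j)=(i,d_{\max}-t-p)$ that you use. If anything, you are slightly more careful than the paper's one-line sketch: by reflecting only the early jobs and deferring the tardy ones, you avoid the (easily fixed) issue that reflecting a job started after $d_{\max}-p$ would yield a negative starting time.
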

To see that \cref{obs:releaseduedate} is true note that a feasible schedule $\sigma$ for $I$ can be transformed into a feasible schedule $\sigma'$ for $I'$ (with the same weighted number of early jobs) by setting $\sigma'(j)=(i,d_{\max}-t-p)$, where $\sigma(j)=(i,t)$. Intuitively, this means that we can switch the roles of release dates and due dates.

We now show that we can restrict ourselves to schedules where jobs may start only at ``few'' different points in time, which will be useful in our proofs.
In order to do so, we define a set~$\mathcal{T}$ of \emph{relevant starting time points}.
\[
\mathcal{T}=\{t\mid \exists \ r_j \text{ and } \exists \ 0\le \ell\le n \text{ s.t.\ } t=r_j+ p\cdot \ell\}
\]
It is known that there always exists an optimal schedule where the starting times of all jobs are in $\mathcal{T}$.
\begin{lemma}[\cite{baptiste2000scheduling,baptiste2004ten}]\label{lem:relevant}
Let $I$ be an instance of \probweighted. Then there exists a feasible schedule $\sigma$ that maximizes the weighted number of early jobs such that for each job $j$ we have $\sigma(j)=(i,t)$ for some $t\in\mathcal{T}$.
\end{lemma}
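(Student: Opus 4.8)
The plan is to prove \cref{lem:relevant} by an exchange argument: start with an arbitrary optimal feasible schedule $\sigma$ and transform it, without decreasing the weighted number of early jobs and without violating feasibility, into one in which every job starts at a time point in $\mathcal{T}$. The natural strategy is to repeatedly shift jobs as far to the left (earlier in time) as possible. First I would fix a machine $i$ and consider the jobs assigned to it, ordered by their starting times $t_1 \le t_2 \le \cdots$; by feasibility these satisfy $t_{k+1} \ge t_k + p$. I would then process these jobs from earliest to latest and push each one to the smallest feasible starting time, namely the maximum of its release date $r_j$ and the completion time of the previous job on that machine.

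The key claim is that after this left-shifting, every job starts at a point of the form $r_{j'} + p\cdot \ell$ for some release date $r_{j'}$ and some integer $0 \le \ell \le n$, which is exactly the definition of $\mathcal{T}$. To see this, I would argue by induction along each machine: the first (earliest-starting) job on a machine, after shifting, starts exactly at its own release date, which is $r_j + p\cdot 0 \in \mathcal{T}$. For any later job, once it has been pushed left it either starts at its own release date (giving $r_j + p\cdot 0$) or it starts exactly when its predecessor finishes. Unrolling this chain of ``finishes-when-predecessor-finishes'' dependencies back to the first job in the chain whose start is determined by its own release date $r_{j'}$, the job in question starts at $r_{j'} + p\cdot \ell$, where $\ell$ counts the number of jobs between that release-determined job and the current one in the machine's order. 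Since at most $n$ jobs precede it, we have $0 \le \ell \le n$, so the start time lies in $\mathcal{T}$.

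It remains to check that the transformation preserves optimality. Shifting a job to an earlier start time can only decrease its completion time, so any job that was early in $\sigma$ remains early; hence the set of early jobs does not shrink and the weighted number of early jobs does not decrease. Since $\sigma$ was already optimal, the weighted number of early jobs stays exactly $W$, and feasibility is maintained throughout because we only ever place a job at a start time respecting both its release date and the $p$-separation from its predecessor on the same machine. Performing this left-shift independently on each of the $m$ machines yields the desired schedule.

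The main obstacle is being careful with the bookkeeping in the inductive argument that pins each start time to the form $r_{j'} + p\cdot \ell$: one must correctly identify, for each job, the earliest job in its ``abutting block'' on the same machine whose start is set by its own release date, and verify that the number of intervening jobs is at most $n$. This is purely combinatorial and routine once the left-shifting procedure and the block structure it induces are made precise, so I expect no conceptual difficulty, only the need for a clean formulation.
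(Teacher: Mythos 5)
Your proof is correct: the per-machine left-shift is well defined (processing jobs earliest to latest, each new start $\max(r_j,\, s_{\text{prev}}+p)$ never exceeds the old one), the block-unrolling pins every start time to the form $r_{j'}+p\cdot\ell$ with $\ell\le n-1\le n$, and completion times only decrease, so optimality and feasibility are preserved. The paper itself states \cref{lem:relevant} without proof, citing Baptiste et al., and your exchange/left-shifting argument is essentially the standard proof underlying that citation, so there is nothing to flag beyond the routine bookkeeping you already identified.
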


\subparagraph{Parameterized Complexity.} We use the following standard concepts from parameterized complexity theory~\cite{Cyg+15,DF13,FG06}.
A \emph{parameterized problem}~$L\subseteq \Sigma^*\times \mathbb{N}$ is a subset of all instances~$(x,k)$ from~$\Sigma^*\times \mathbb N$, where~$k$ denotes the \emph{parameter}.
A parameterized problem~$L$ is 
in the class FPT (or \emph{fixed-parameter tractable}) if there is an algorithm that decides every instance~$(x,k)$ for~$L$ in~$f(k)\cdot |x|^{O(1)}$ time for some computable function~$f$ that depends only on the parameter. 
A parameterized problem~$L$ is 
in the class XP if there is an algorithm that decides every instance~$(x,k)$ for~$L$ in~$|x|^{f(k)}$ time for some computable function~$f$ that depends only on the parameter.
If a parameterized problem $L$ is W[1]-hard or W[2]-hard, then it is presumably not contained in FPT~\cite{Cyg+15,DF13,FG06}.

\section{Hardness of \boldmath\probunweighted}

In this section, we present our main result, namely that the unweighted version of our scheduling problem, \probunweighted, is NP-hard and W[2]-hard when parameterized by the number $m$ of machines. The former resolves an open problem in Note 2.1.19 by Kravchenko and Werner~\cite{kravchenko2011parallel} and Open Problem 2 by Sgall~\cite{Sgall12}, and the latter resolves Open Problem 7 by Mnich and van Bevern~\cite{mnich2018parameterized}.

\begin{theorem}\label{thm:unweighted-machines-w2}
    \probunweighted is NP-hard and W[2]-hard parameterized by the number~$m$ of machines.
\end{theorem}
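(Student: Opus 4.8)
The plan is to give a parameterized reduction from a canonical W[2]-hard problem, namely \textsc{Set Cover} (equivalently \textsc{Dominating Set}), parameterized by the solution size~$k$. Recall that in \textsc{Set Cover} we are given a universe~$U=\{u_1,\dots,u_q\}$, a family~$\mathcal{S}=\{S_1,\dots,S_\ell\}$ of subsets of~$U$, and an integer~$k$, and we must decide whether $k$ sets from~$\mathcal{S}$ suffice to cover~$U$. Since \textsc{Set Cover} parameterized by~$k$ is W[2]-hard and NP-hard, a reduction that sets the number~$m$ of machines to some function of~$k$ alone (ideally $m=k$ or $m=k+O(1)$) and runs in polynomial time will establish both claims in~\cref{thm:unweighted-machines-w2} simultaneously.

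The central design idea is to identify the $m$ machines with the $k$ chosen sets: a feasible schedule that keeps few jobs tardy should correspond exactly to a choice of~$k$ sets covering~$U$. First I would set up $m=k$ machines and use the structure of release/due dates to create, for each set~$S_i\in\mathcal{S}$, a ``gadget'' of time slots on the machines so that committing a machine to processing the job corresponding to~$S_i$ encodes the decision ``put $S_i$ into the cover''. For each element~$u\in U$ I would introduce an \emph{element job} with a tight release-date/due-date window forcing it to be scheduled in one of the slots associated with a set~$S_i$ containing~$u$; if no chosen set covers~$u$, its job is forced to be tardy. The uniform processing time~$p$ and the combinatorial structure of~$\mathcal{T}$ from~\cref{lem:relevant} are what make it possible to lay out these windows on a common time axis so that a machine can ``serve'' only the elements of a single set within the critical window. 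I would then argue that all element jobs can be made early if and only if the $k$ machines collectively cover all of~$U$, i.e.\ if and only if a size-$k$ set cover exists; the target number of tardy jobs is~$0$ (or a fixed threshold absorbing unavoidable overhead).

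The two directions of correctness are then routine once the gadget is right. For the forward direction, a size-$k$ set cover yields a machine-to-set assignment and, via the chosen sets, a feasible early schedule of all element jobs. For the reverse direction, I would show that in any schedule meeting the tardiness threshold, each machine's activity during the critical window can afford to serve only one set's worth of elements (this is where uniform processing time and careful spacing of due dates are exploited), so reading off which set each machine commits to recovers a cover of size at most~$k$. To make the windows behave, I expect to need filler or blocking jobs that saturate the machines outside the critical window and enforce that element jobs land in the intended slots; \cref{obs:releaseduedate} may help symmetrize the construction so that release-date and due-date constraints can be argued interchangeably.

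The main obstacle I anticipate is enforcing that a single machine can serve elements of only \emph{one} set within the critical window, rather than cherry-picking individual elements from many different sets. With uniform processing times, a machine in a window of length~$p\cdot c$ can process~$c$ jobs regardless of which sets they belong to, so naively nothing ties a machine to a single set. Overcoming this requires encoding the ``chosen set'' as a structural commitment---for instance, by giving each set~$S_i$ a dedicated high-priority \emph{selector} job whose feasible placement occupies the machine precisely during the window in which that machine must also process $S_i$'s element jobs, together with a counting/packing argument (a pigeonhole over the available slots in~$\mathcal{T}$) showing that the total early capacity across~$k$ machines equals $|U|$ exactly only when the served sets form a cover. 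Balancing the slot counts so that the capacity is tight---neither slack enough to allow spurious solutions nor so tight that a genuine cover fails to schedule---is the delicate quantitative heart of the construction, and getting the arithmetic of release dates, due dates, and the set~$\mathcal{T}$ to line up is where the real work lies.
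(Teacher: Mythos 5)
Your proposal correctly identifies the right source problem family (\textsc{Set Cover}/\textsc{Hitting Set} parameterized by solution size, which gives NP-hardness and W[2]-hardness simultaneously with $m=k$ machines), and you even pinpoint the correct central obstacle: with uniform processing times, nothing in a single window ties a machine to one set, so a machine can ``cherry-pick'' jobs belonging to different sets. But you do not overcome this obstacle; you only gesture at ``selector jobs'' and a ``tight capacity'' pigeonhole without a concrete mechanism, and you explicitly defer the arithmetic. That deferred arithmetic is not a routine detail --- it is the entire content of the proof, so as it stands the proposal has a genuine gap exactly at the point you flagged.

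The paper's construction resolves the obstacle in a way that differs conceptually from what you sketch, in two respects. First, it dualizes: each machine selects an \emph{element} of a hitting set (equivalently, a set in the dual set-cover view), not a set, and the selection is encoded in the \emph{start-time residue modulo $p$}. Time is divided into consecutive blocks of length $p=2n$, one block per set $A_j$; the job $J_{A_j,u_i}$ (for $u_i\in A_j$) has a zero-slack window $[r, r+p]$ with $r=(\text{block start})+i$, so being early forces it to start at offset exactly $i$, while $k-1$ dummy jobs per block (with slack $n$) keep the machines not hitting $A_j$ busy. Second --- and this is the key idea missing from your plan --- the paper does \emph{not} enforce commitment within a single pass at all. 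Instead it observes that along any one machine the offset can only weakly increase from block to block (since $p=2n$ and consecutive early jobs cannot overlap), so each machine changes its ``selected element'' at most $n-1$ times, and all $k$ machines together drift at most $k(n-1)$ times. Repeating the whole block structure $k(n-1)+1$ times then guarantees, by pigeonhole, one drift-free repetition in which every machine starts all its early jobs at a fixed offset $x^q$, and the elements $u_{x^1},\ldots,u_{x^k}$ form a hitting set: in each block, the $k-1$ dummies cannot cover all $k$ machines, so some machine runs a job $J_{A_j,u_i}$ with $i$ equal to its offset. Without this monotone-drift-plus-repetition argument (or some concrete substitute for it), your forward direction would go through but the reverse direction would fail for exactly the reason you anticipated. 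Note also that the natural target is not ``zero tardy jobs'' but a count of early jobs saturating every machine in every block ($(k(n-1)+1)\cdot m\cdot k$ in the paper), since unused job copies are necessarily tardy.
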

    In order to show \Cref{thm:unweighted-machines-w2}, we present a parameterized reduction from \textsc{Hitting Set} parameterized by solution size~$k$, which is known to be NP-hard~\cite{Kar72} (unparameterized) and W[2]-hard~\cite{DF99}.
    
    \decprob{\textsc{Hitting Set}}{
    A finite set $U = \{u_0, \ldots, u_{n -1}\}$, a set~$\mathcal{A} = \{A_0, \ldots, A_{m-1}\}$ of subsets of $U$, and an integer $k$.
    }{
    Is there a \emph{hitting set} of size~$k$, that is, a set $X\subseteq U$ with $|X| = k$ and $X \cap A_j \neq \emptyset$ for every $j \in \{0, \ldots, m-1\}$?
    }
    
    Let $I=(U = \{u_0, \ldots, u_{n-1}\}, \mathcal A = \{A_0, \ldots, A_{m-1}\}, k)$ be an instance of \textsc{Hitting Set}. In order to ease the presentation, we give jobs names rather than identifying them with natural numbers. Furthermore, for a job $J$, we use $r(J)$ to denote the release date of $J$, and we use $d(J)$ to denote the deadline of $J$.
    
    Our reduction will have $k$ machines.
    The main idea behind the reduction is as follows.
    Each machine acts as an ``selection gadget'', that is, we will interpret the jobs scheduled on each machine in an optimal schedule as the selection of a particular element of $U$ to be included in the hitting set. 
    As there are $k$ machines, this ensures that the (hitting) set consisting of the selected elements has size at most $k$.
    Intuitively, we want that selecting element~$u_i$ on a machine corresponds to all jobs on this machine starting at time $i$ modulo $p$ in an optimal schedule.
    For each set~$A_j \in \mathcal{A}$, there are two kinds of jobs.
    \begin{itemize}
        \item First, jobs~$J_{A_j, u_i}$ for each $u_i \in A_j$, where scheduling job~$J_{A_j, u_i}$ encodes that the element $u_i$ is selected to be part of the hitting set.
        \item Second, there are $k-1$ dummy jobs~$D_{A_j}$ which can be scheduled on the up to $k-1$ machines not corresponding to elements of~$A_j$.
    \end{itemize}
    We give the jobs~$J_{A_j, u_i}$ and $D_{A_j}$ release dates and due dates such that they are the only jobs which can be started in the interval~$[j\cdot p, (j+1) \cdot p -1]$ and are early. Intuitively, this makes sure that an optimal solution has to schedule one of these jobs on each machine.
    In particular, one job~$J_{A_j,u_i}$ is scheduled, implying that $A_j$ is hit by one of the selected elements.

    There is, however, one problem with the reduction as sketched above.
    We do not ensure that all early jobs scheduled on some machine start at the same time modulo $p$.
    Thus, it is possible to schedule e.g.\ first job~$J_{A_1, u_1}$ on machine 1 and then job~$J_{A_2, u_4}$ such that both those jobs are early. Machine 1 now does not encode the selection of a single element to the hitting set.
    However, note that it is only possible to increase the index of the ``selected'' element, and as there are only $k$ machines, the total increase is bounded by $k \cdot (n-1)$.
    Consequently, repeating the reduction sketched above $k \cdot (n-1) + 1$ times ensures that at least one of the repeated instances will select only one item per machine, and then this instance correctly encodes a hitting set of size $k$.

    We now describe the reduction in detail. Formally, given the instance $I$ of \textsc{Hitting Set}, we construct an instance $I'$ of \probunweighted as follows.
    We set the processing time to~$p =2n$.
    We construct the following jobs for each $A_j \in \mathcal A $ and $\ell \in \{0, \ldots, k \cdot (n-1)\}$:
    \begin{itemize}
        \item one job~$J_{A_j, u_i}^\ell$ for each $u_i \in A_j$ with release date $r(J^\ell_{A_j, u_i}) = (\ell \cdot m  + j) \cdot p + i$ and due date $d(J^\ell_{A_j, u_i}) = r(J^\ell_{A_j, u_i}) + p$, and 
        \item $k-1 $ dummy jobs $D_{A_j}^\ell$ with release date $r(D_{A_j}^\ell) = (\ell \cdot m + j) \cdot p$ and due date $d(D_{A_j}^\ell) = r(D_{A_j}^\ell) + p + n$.
    \end{itemize}
    
    Finally, we set the number of machines to~$k$.
    This finishes the construction.
    For an overview of the due dates and release dates of the jobs, see \Cref{tab:jobs}. We can easily observe the following.
    \begin{table}[t]
        \centering
        $
        \begin{array}{c|c | c|c}
             \text{job} & \text{release date} & \text{due date} & \text{multiplicity}\\
             \hline\rule{0pt}{2.6ex}
             J^\ell_{A_j, u_i} & (\ell \cdot m + j) \cdot p + i & (\ell \cdot m  + j+1) \cdot p + i & 1\\
             \rule{0pt}{3ex}
             D^\ell_{A_j} & (\ell \cdot m + j) \cdot p & (\ell \cdot m + j + 1) \cdot p + n & k-1
        \end{array}
        $
        \smallskip
        \caption{Overview of the release dates and due dates of the jobs created for each $A_j \in \mathcal A $ and $\ell \in \{0, \ldots, k \cdot (n-1)\}$.}
        \label{tab:jobs}
    \end{table}

\begin{observation}\label{obs:reduction}
    Given an instance $I$ of \textsc{Hitting Set}, the above described instance $I'$ of  \probunweighted can be computed in polynomial time and has $k$ machines.
\end{observation}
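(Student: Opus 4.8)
The plan is to verify the two assertions separately, both of which follow by direct inspection of the construction. The claim that $I'$ has exactly $k$ machines is immediate, since the number of machines is explicitly set to $k$ at the end of the construction.

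For polynomial-time computability, I would first bound the number of jobs. For each of the $m$ sets $A_j \in \mathcal{A}$ and each of the $k(n-1)+1$ values of $\ell \in \{0, \ldots, k(n-1)\}$, the construction produces at most $n$ jobs of type $J^\ell_{A_j, u_i}$ (one for each element $u_i \in A_j$) together with exactly $k-1$ dummy jobs $D^\ell_{A_j}$. Hence the total number of jobs is at most $m \cdot (k(n-1)+1) \cdot (n + k - 1)$, which is polynomial in the input size $|I|$ (noting that we may assume $k \le n$, as otherwise $I$ is a trivial instance).

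Next I would check that all numerical values produced are of polynomial magnitude, so that each job can be written down in polynomial time. The processing time is $p = 2n$. Reading off the release dates and due dates from \cref{tab:jobs}, the largest value occurring is bounded by $(k(n-1) \cdot m + (m-1)) \cdot p + n$, which is polynomial in $n$, $m$, and $k$. Since there are polynomially many jobs, each described by numbers of polynomial magnitude, the whole instance $I'$ is constructed in polynomial time, which completes the argument.

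I expect no genuine obstacle here, as the statement is a routine sanity check on the reduction. The only point requiring minor care is the bookkeeping in the job count, in particular the factor $k(n-1)+1$ arising from the repetition over $\ell$ (which is precisely the mechanism ensuring correctness of the reduction, as discussed before the construction) and the multiplicity $k-1$ of the dummy jobs.
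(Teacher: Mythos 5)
Your proof is correct and takes essentially the same approach as the paper, which states this observation without explicit proof precisely because the routine verification you spell out (the machine count is set to $k$ by construction; there are at most $m\cdot(k(n-1)+1)\cdot(n+k-1)$ jobs, using $k\le n$; all release dates and due dates have magnitude polynomial in $n$, $m$, $k$) is the intended justification. One arithmetic nitpick: the largest value occurring is the due date $(k(n-1)\cdot m + m)\cdot p + n$ of the dummy jobs with $\ell = k(n-1)$ and $j = m-1$, not $(k(n-1)\cdot m + (m-1))\cdot p + n$, but this off-by-one slip does not affect the polynomial bound or the conclusion.
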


    We continue by showing the correctness of the reduction. More specifically, we show that the \textsc{Hitting Set} $I$ instance is a yes-instance if and only if the constructed instance $I'$ of \probunweighted admits a feasible schedule with $(k\cdot (n-1) + 1) \cdot m \cdot k$ early jobs. 
We split the proof into the forward and backward direction.
    We start with the forward direction.
    
    \begin{lemma}\label{lem:hardness-forward}
        If the \textsc{Hitting Set} instance $I$ admits a hitting set of size $k$, then the \probunweighted instance $I'$ admits a feasible schedule with $(k\cdot (n-1) + 1) \cdot m\cdot k$ early jobs.
    \end{lemma}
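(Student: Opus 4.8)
The plan is to build the required schedule directly from a hitting set, using the $k$ machines as the $k$ ``selection slots''. Let $X = \{u_{i_1}, \ldots, u_{i_k}\}$ be a hitting set of size exactly $k$; since $|X| = k$, the indices $i_1, \ldots, i_k \in \{0, \ldots, n-1\}$ are pairwise distinct. I would associate the $c$-th machine with the element $u_{i_c}$ and insist that \emph{every} job placed on machine $c$ starts at a time congruent to $i_c$ modulo $p$. Concretely, for each pair $(\ell, j)$ with $\ell \in \{0, \ldots, k\cdot(n-1)\}$ and $j \in \{0, \ldots, m-1\}$ (which I will call a \emph{block}), I place exactly one job on each machine~$c$: if $u_{i_c} \in A_j$, I schedule the unique job $J^\ell_{A_j, u_{i_c}}$ at its release date $(\ell\cdot m + j)\cdot p + i_c$; otherwise I schedule one of the dummy jobs $D^\ell_{A_j}$ with starting time $(\ell\cdot m + j)\cdot p + i_c$.

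Three checks then establish that this is a feasible schedule with the claimed number of early jobs. First, \emph{availability}: since $X$ hits $A_j$, at least one machine schedules a $J$-job in block $(\ell, j)$, so at most $k-1$ machines need a dummy, and as exactly $k-1$ copies of $D^\ell_{A_j}$ exist there are enough; moreover, because the $i_c$ are distinct, the $J$-jobs requested within a single block are pairwise distinct, and each has multiplicity one. Second, \emph{earliness}: a job $J^\ell_{A_j, u_{i_c}}$ starts exactly at its release date and hence finishes exactly at its due date, so it is early; a dummy $D^\ell_{A_j}$ starts at offset $i_c \le n-1 < n$, which lies in its admissible window $[(\ell\cdot m + j)\cdot p,\ (\ell\cdot m + j)\cdot p + n]$, so it finishes by its due date $(\ell\cdot m + j)\cdot p + p + n$ and is likewise early. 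Third, \emph{feasibility}: on machine $c$, consecutive blocks carry consecutive values of $\ell\cdot m + j$, so the job in block $(\ell, j)$ occupies the interval $[(\ell\cdot m + j)\cdot p + i_c,\ (\ell\cdot m + j + 1)\cdot p + i_c]$ while the next job on that machine starts exactly at $(\ell\cdot m + j + 1)\cdot p + i_c$; thus any two jobs on the same machine start at least $p$ apart and none overlap.

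Finally I would count: each machine receives exactly one early job per block, and there are $k$ machines and $(k\cdot(n-1)+1)\cdot m$ blocks, giving $(k\cdot(n-1)+1)\cdot m\cdot k$ early jobs, as required. I do not expect a genuine obstacle in this (forward) direction; the only points needing care are keeping each machine at a fixed residue $i_c$ modulo $p$ so that the width-one placement windows of the $J$-jobs pack gap-free and overlap-free, and checking that the slightly larger dummy window of width $n+1$ comfortably accommodates every offset $i_c < n$. The real difficulty of the whole reduction lies instead in the converse direction, where one must argue that the ``selected index'' can only increase along a machine and that the total increase is bounded by $k\cdot(n-1)$, which is exactly what forces the $k\cdot(n-1)+1$ repetitions.
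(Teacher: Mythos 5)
Your construction coincides with the paper's own proof: on each machine $q$ and in each block $(\ell,j)$ you start one job at time $(\ell\cdot m + j)\cdot p + i_q$, choosing $J^\ell_{A_j,u_{i_q}}$ when $u_{i_q}\in A_j$ and a dummy $D^\ell_{A_j}$ otherwise, with the same availability (hitting-set and multiplicity), earliness, and non-overlap checks, so the proposal is correct and takes essentially the same approach. The only cosmetic omission is that a schedule must place \emph{all} jobs, so you should add, as the paper does, that the remaining jobs are scheduled arbitrarily so that the schedule stays feasible.
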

    \begin{proof}
        Let $X = \{u_{i_1}, \ldots, u_{i_k}\}$ be a hitting set of size $k$ for $I$.
        We construct a schedule for~$I'$ as follows.
        On machine~$q$, for each $\ell \in \{0, \ldots, k \cdot (n-1)\}$ and $j\in \{0, \ldots, m-1\}$, we schedule one job to start at time~$(\ell \cdot m + j)\cdot p + i_q$.
        This job is $J^\ell_{A_j, u_{i_q}}$ if $u_{i_q} \in A_j$, and $D^\ell_{A_j}$ otherwise.
        Because $X$ is a hitting set, we have that $u_{i_q} \in A_j$ for some $q\in\{1,\ldots,k\}$ and hence we schedule each dummy job~$D^\ell_{A_j}$ at most $k-1$ times, that is, at most its multiplicity times.

        We can observe that all jobs scheduled so far are early. For each job $J^\ell_{A_j, u_{i_q}}$ that is scheduled on machine $q$, we have set its starting time to $(\ell \cdot m + j)\cdot p + i_q$ which equals this job's release date (cf.\ \cref{tab:jobs}). Furthermore, job $J^\ell_{A_j, u_{i_q}}$ finishes at $(\ell \cdot m + j+1)\cdot p + i_q$, its deadline. Each dummy job $D^\ell_{A_j}$ is early as well, since their release times are smaller or equal to the release time of $J^\ell_{A_j, u_{i_q}}$, and their due dates are larger or equal to the due date of $J^\ell_{A_j, u_{i_q}}$.
        Furthermore, we can observe that there is no overlap in the processing times between any two jobs scheduled on machine~$q$.

        It follows that we have feasibly scheduled $(k\cdot (n-1) + 1) \cdot m \cdot k$ such that they finish early. We schedule the remaining jobs in some arbitrary way such that the schedule remains feasible.
    \end{proof}

    Next, we continue with the backward direction.
    
    \begin{lemma}\label{lem:hardness-backward}
        If the \probunweighted instance $I'$ admits a feasible schedule with $(k\cdot (n-1) + 1) \cdot m\cdot k$ early jobs, then the \textsc{Hitting Set} instance $I$ admits a hitting set of size~$k$.
    \end{lemma}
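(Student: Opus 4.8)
The plan is to show that any feasible schedule reaching the target is forced into a rigid ``one job per machine per block'' pattern, and then to read a hitting set off this pattern. Write $L=k\cdot(n-1)+1$ for the number of values of $\ell$, and for $\ell\in\{0,\dots,L-1\}$, $j\in\{0,\dots,m-1\}$ call $B_{\ell,j}=[(\ell m+j)p,\ (\ell m+j+1)p-1]$ the \emph{block} of slot $(\ell,j)$; these $Lm$ blocks partition the integer interval $[0,Lm\cdot p)$. First I would prove the counting bound: an early job created for slot $(\ell,j)$ can only start inside $B_{\ell,j}$, since a job $J^\ell_{A_j,u_i}$ is early only if it starts exactly at its release date $(\ell m+j)p+i$ and a dummy $D^\ell_{A_j}$ is early only if it starts in $[(\ell m+j)p,(\ell m+j)p+n]$, both contained in $B_{\ell,j}$ because $i\le n-1$ and $n\le p-1$. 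Conversely, using $p=2n$, no job of another slot can be early while starting in $B_{\ell,j}$ (jobs of earlier slots have due date at most $(\ell m+j)p+n$, before any job starting in $B_{\ell,j}$ can finish, while jobs of later slots are released only at time $(\ell m+j+1)p$ or later). Since two jobs starting in the same block differ in start time by less than $p$, at most one job per machine starts per block, so each block holds at most $k$ early jobs and the whole schedule at most $Lmk$ of them.

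Because the target equals $Lmk$, equality must hold everywhere: every block contains exactly $k$ early jobs, one on each machine. As each slot supplies only $k-1$ dummy jobs, at least one of the $k$ early jobs in $B_{\ell,j}$ is a selection job $J^\ell_{A_j,u_i}$, which certifies $u_i\in A_j$. Next I would establish monotonicity per machine. Fix a machine $q$ and let the \emph{offset} of its unique early job in a block be the start time minus the block start; selection jobs have offset $i\in\{0,\dots,n-1\}$ and dummies offset in $\{0,\dots,n\}$. The early jobs of $q$ in consecutive blocks have start times differing by exactly $p$ plus the offset difference, so feasibility forces the offsets to be non-decreasing along $q$. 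Restricting to the selection jobs of $q$, their offsets are precisely the indices of the selected elements, form a non-decreasing sequence in $\{0,\dots,n-1\}$, and therefore change at most $n-1$ times.

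I would then apply pigeonhole over the rounds. Call round $\ell$ \emph{inconsistent for $q$} if $q$ carries two selection jobs with different indices inside round $\ell$; each index change of $q$ spoils at most one round, so $q$ is inconsistent in at most $n-1$ rounds, and over all $k$ machines at most $k(n-1)$ rounds are inconsistent for some machine. Since there are $L=k(n-1)+1$ rounds, some round $\ell^\star$ is consistent for every machine, so in round $\ell^\star$ each machine $q$ uses a single selection index $i_q$. Setting $X=\{u_{i_q}: q\text{ has a selection job in round }\ell^\star\}$ gives $|X|\le k$; for every $j$ the block $B_{\ell^\star,j}$ contains a selection job $J^{\ell^\star}_{A_j,u_{i_q}}$ on some machine $q$, and consistency yields $u_{i_q}\in A_j$, so $X$ meets $A_j$. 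Padding $X$ to size exactly $k$ (which is possible as $k\le n$) yields the desired hitting set.

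The main obstacle is the counting in the pigeonhole step: it is crucial to track the offsets of the \emph{selection} jobs only, whose indices live in $\{0,\dots,n-1\}$ and hence change at most $n-1$ (not $n$) times per machine, since this is exactly what the $k(n-1)+1$ repetitions are calibrated against and what guarantees a consistent round. The supporting work is making the block decomposition watertight, namely proving that only same-slot jobs can be early inside a block and that at most one job per machine starts per block, which together pin every optimal schedule to one early job per machine per block and let the structural argument even get started.
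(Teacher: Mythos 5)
Your proposal is correct and follows essentially the same route as the paper's proof: force exactly one early job per machine per length-$p$ block, observe that start-time offsets are non-decreasing along each machine by feasibility, pigeonhole over the $k\cdot(n-1)+1$ rounds to find a round where every machine is consistent, and use the multiplicity $k-1$ of the dummy jobs to extract a selection job hitting each $A_j$. Your bookkeeping of tracking only the \emph{selection}-job indices (rather than a single offset per round, as the paper does) is a minor variation that even sidesteps the edge case of a machine whose offset is $n$, which only dummy jobs can realize.
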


    \begin{proof}
        Let $\sigma$ be a feasible schedule for $I'$ with $(k\cdot (n-1) + 1) \cdot m \cdot k$ early jobs.
        Since the largest due date is $(k \cdot (n-1) \cdot m + m) \cdot p + n$ and $p=2n$, it follows that on each machine, at most $k \cdot (n-1) \cdot m + m$ jobs can be scheduled in a feasible way such that they finish early.
        Consequently, on each machine, there are exactly $(k\cdot (n-1) + 1) \cdot m$ early jobs.
        
        More specifically, for each machine, each $\ell \in \{0, \ldots, k \cdot (n-1) \cdot m\} $ and $j \in \{0, \ldots , m-1\}$, there must be one job which starts in the interval $[(\ell \cdot m + j) \cdot p, \ell \cdot m\cdot p + (j-1) \cdot p + n]$. Otherwise, there would be less than $(k\cdot (n-1) + 1) \cdot m$ early jobs on that machine.
        For machine~$q$, let $x_\ell^q \in \{1,\ldots,n\}$ such that there is a job on machine~$q$ which starts at time $\ell \cdot m\cdot p + x_\ell^q$ (the existence of such a job is guaranteed by the previous observation).
        Then we must have $1\le x_0^q \le x_1^q \le \ldots x_{k \cdot (n-1) + 1}^q \le n$. This follows from the observation that if $x_\ell^q>x_{\ell+1}^q$, then the starting time of the job corresponding to $x_{\ell+1}^q$ would be earlier than the completion time of the job corresponding to $x_{\ell}^q$ and hence the schedule would be infeasible.
        Consequently, there are at most $n-1$ values of $\ell$ such that $x_\ell^q \neq x_{\ell  +1}^q$.
        As there are $k$ machines, this implies that there are at most $k \cdot (n-1) $ values such that $x_\ell^q  \neq x_{\ell + 1}^q $ for \emph{some} machine $q$.
        We can conclude that there exists at least one $\ell \in \{0, \ldots, k\cdot (n-1)\}$ so that $x_\ell^q = x_{\ell+1}^q$ for every machine~$q$.
        This implies that for each $j \in \{0, \ldots, m-1\}$ and each machine~$q$, there is one job starting at time $(\ell \cdot m + j)\cdot  p + x_\ell^q$ on machine $q$.
        We fix such an~$\ell $ for the rest of the proof and claim that $X=\{u_{x_\ell^1}, \ldots, u_{x_\ell^k}\}$ is a hitting set of size at most~$k$ for~$I$. 
        
        Clearly, $X$ has size at most $k$.
        Consider some set $A_j\in\mathcal{A}$ with $j \in \{0, \ldots, m -1\}$.
        The only jobs which can start in the interval $[(\ell \cdot m + j) \cdot p, (\ell \cdot m + j) \cdot p + n]$ are the dummy jobs~$D_{A_j}^\ell$ and the jobs~$J_{A_j, u_i}^\ell$ for $u_i \in A_j$.
        Because there are only $k-1$ dummy jobs~$D_{A_j}^\ell$ but $k$ machines, this implies that there exists at least one machine~$q$ such that job~$J_{A_j, u_i}^\ell$ is scheduled at machine $q$ for some $u_i \in A_j$.
        We know that on machine~$q$ one job starts in the interval $[(\ell \cdot m + j) \cdot p, (\ell \cdot m + j) \cdot p + n]$ and has starting at time $(\ell \cdot m + j) \cdot p + x_\ell^q$. By construction, this job must be $J_{A_j, u_i}^\ell$ with $i=x_\ell^q$ which implies that $u_i = u_{x_\ell^q}\in X$.
        We can conclude that $X$ is a hitting set for $I$.
    \end{proof}

Now we have all the pieces to prove \cref{thm:unweighted-machines-w2}.
\begin{proof}[Proof of \cref{thm:unweighted-machines-w2}]
\Cref{obs:reduction} shows that the described reduction can be computed in polynomial time and produces an instance of \probunweighted with $k$ machines. \Cref{lem:hardness-forward,lem:hardness-backward} show that the described reduction is correct. Since \textsc{Hitting Set} is known to be NP-hard~\cite{Kar72} and W[2]-hard when parameterized by $k$~\cite{DF99}, the result follows.
\end{proof}
\section{New Analysis of Known Algorithm for \boldmath\probweighted}

With \cref{thm:unweighted-machines-w2} we have established that \probunweighted is NP-hard. Hence, it is natural to resort to parameterized algorithms for efficiently finding exact solutions in restricted cases.
To the best of our knowledge, the only known parameterized algorithm for \probweighted is an XP-algorithm for the number $m$ of machines as a parameter by Baptiste et al.~\cite{baptiste2000scheduling,baptiste2004ten}.

\begin{theorem}[\cite{baptiste2000scheduling,baptiste2004ten}]\label{thm:xp-machines}
    \probweighted is in XP when parameterized by the number~$m$ of machines.
\end{theorem}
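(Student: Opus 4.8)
The plan is to turn \cref{lem:relevant} into a dynamic program over time whose state records the configuration of the $m$ machines, and then argue that this state space has size $|I|^{O(m)}$. First I would invoke \cref{lem:relevant} to restrict attention to schedules in which every job starts at a point of $\mathcal{T}$. Since $|\mathcal{T}| \le n\cdot(n+1)$, there are only polynomially many candidate starting times; sort them as $t_1 < t_2 < \dots < t_N$ with $N = |\mathcal{T}|$. Consequently the completion time of any job lies in the polynomially large set $\mathcal{T}+p := \{t + p \mid t \in \mathcal{T}\}$.

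The dynamic program scans $t_1, \dots, t_N$ in increasing order. Its state is the multiset $F = \{f_1, \dots, f_m\}$ of times at which the $m$ machines next become available, where each $f_i \in (\mathcal{T}+p) \cup \{0\}$ (with $0$ meaning \emph{currently idle}). As the machines are identical, I would store $F$ as a sorted tuple, so the number of states is at most $(N+2)^m = |I|^{O(m)}$; this is the source of the XP bound. For a fixed state the stored value is the maximum weight of early jobs schedulable up to the current time that leaves the machines in configuration $F$. When advancing to $t_k$, every machine with $f_i \le t_k$ is free; a transition chooses a subset of the free machines, starts one new job on each at time $t_k$, and updates the corresponding $f_i$ to $t_k + p$. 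A job may be started at $t_k$ only if $r_j \le t_k$ and $t_k + p \le d_j$, i.e.\ only if doing so keeps it early. Iterating over all states and all subsets of free machines at each of the $N$ time points costs $|I|^{O(m)}$ time, and the optimum is read off from the best value over all final states. (The symmetric formulation via \cref{obs:releaseduedate} could also be used, but is not needed here.)

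The main obstacle is the bookkeeping of job identities: the transitions must not start the same job twice, yet the state deliberately forgets which jobs have already been used, since tracking that subset would blow the state space up to $2^n$. I would resolve this by decoupling the two decisions. The machine-profile transitions above determine only the set $S$ of occupied slots $(\text{machine}, \text{starting time})$; for a fixed feasible $S$, assigning actual jobs to these slots so as to maximize total weight is exactly a maximum-weight bipartite matching between the jobs and $S$, where job $j$ is compatible with a slot of starting time $t$ iff $t \in [r_j, d_j - p]$. Since each job is compatible with a contiguous block of starting times, this bipartite graph has interval structure, which lets the matching be folded into the scan by an exchange argument: process the eligible jobs in a canonical order and commit the most valuable ones whose windows are about to close.

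Making this integration precise — so that the quantity maximized by the profile dynamic program provably coincides with $\max_S$ of the matching weight over feasible slot sets $S$ — is the delicate part, and I expect it to be where the bulk of the work and the correctness proof lie. By contrast, the remaining ingredients (the polynomial state count, the polynomial number of transitions per step, and the reduction to relevant starting times via \cref{lem:relevant}) are routine.
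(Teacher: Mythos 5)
Your state space and counting are fine (relevant starting times via \cref{lem:relevant}, availability multisets stored as sorted tuples, $|I|^{O(m)}$ states), but the proof has a genuine gap exactly where you locate it yourself: the claim that the job-to-slot matching ``can be folded into the scan by an exchange argument'' is unsubstantiated, and it is the entire difficulty of the theorem rather than a routine detail. Two different DP paths can reach the same availability profile at the same scan position while having consumed different sets of jobs, and the optimal continuation genuinely depends on which jobs remain; so the value function of your profile DP is not well defined unless the commitment rule is proven path-independent. The natural rules fail: committing ``the most valuable jobs whose windows are about to close'' is an EDF-type greedy, and the standard exchange that would justify it --- swapping two early jobs $j, j'$ with $d_j \le d_{j'}$ so that $j$ occupies the earlier slot --- requires $r_j$ to be at most the earlier starting time, which release dates can violate. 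Consequently, in an optimal schedule a job with a small due date may have to start \emph{after} a job with a large due date, so no canonical (deadline- or weight-ordered) processing order is compatible with a left-to-right scan whose state is only the machine-availability multiset. Note also that the transversal-matroid greedy you allude to is valid only for a \emph{fixed} slot set $S$; here $S$ is optimized simultaneously, and whether a pending job can still be matched (a Hall-type condition) depends on future slot decisions that your state does not determine.

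The paper's proof (following Baptiste et al.) circumvents precisely this obstruction with a different decomposition: jobs are sorted by due date, and the DP is indexed by a job prefix together with a \emph{pair} of resource profiles $(a,b)$ bracketing a subproblem, with entries $W(k,a,b)$. The recurrence of \cref{lem:dpcorr} guesses the resource profile $x$ at which job $k$ --- the job with the largest due date among $1,\dots,k$ --- starts, and splits the remaining jobs into two \emph{independent} subproblems $W(k-1,a,x)$ and $W(k-1,x',b)$; since every job $j<k$ satisfies $d_j \le d_k$, it may appear on either side of $x$, which is exactly how jobs with small due dates but large release dates scheduled after $x$ are handled without remembering job identities. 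This divide-and-conquer over jobs in due-date order, with profile pairs as boundary conditions, is the idea missing from your proposal; without it (or a proven substitute for your deferred ``integration'' step), the scan-based DP cannot be completed.
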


Since \cref{thm:unweighted-machines-w2} also shows W[2]-hardness for \probunweighted parameterized by the number $m$ of machines, the algorithm behind \cref{thm:xp-machines} presumably cannot be improved to an FPT-algorithm.

However, as it turns out, we can upper-bound the running time of the algorithm behind \cref{thm:xp-machines} in different ways to obtain additional tractability results. In the remainder of this section, we show that the algorithm developed by Baptiste et al.~\cite{baptiste2000scheduling,baptiste2004ten} additionally to \cref{thm:xp-machines} also implies the following.

\begin{theorem}\label{thm:xp-proctime}
    \probweighted is in XP when parameterized by the processing time~$p$ and \probweighted is in FPT when parameterized by the combination of the number~$m$ of machines and the processing time $p$.
\end{theorem}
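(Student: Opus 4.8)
The plan is to reuse the dynamic-programming algorithm of Baptiste et al.\ behind \cref{thm:xp-machines} and simply re-bound the size of its state space. By \cref{lem:relevant}, we may assume that every job starts at one of the relevant starting time points in $\mathcal{T}$, so the algorithm processes the points of $\mathcal{T}$ in increasing order $t_1 < t_2 < \cdots$, at each point deciding which (if any) eligible jobs to start on the currently idle machines so as to maximize the total weight of early jobs. The only information needed to continue the computation after processing $t_i$ is the current machine configuration, that is, when each machine will next become idle. Since every job has processing time exactly $p$, a job started at time $s$ occupies its machine precisely during $[s, s+p)$; hence at time $t_i$ the machines that are still busy are exactly those running a job that started in the half-open window $(t_i - p, t_i]$.

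The key observation for the new analysis is that this window contains only few relevant starting time points. An interval of length $p$ contains at most $p$ integers, so $|\mathcal{T} \cap (t_i - p, t_i]| \le p$; in fact, since every element of $\mathcal{T}$ is congruent modulo $p$ to some $r_j$, at most one relevant point per residue class lies in the window, giving the sharper bound $\min(p, r_\#)$. Consequently, the machine configuration at $t_i$ is fully described by recording, for each of these at most $p$ relevant points $s$, how many machines run a job started at $s$ (such a machine becomes idle at $s + p$). Each such count lies in $\{0, \ldots, m\}$ and the counts sum to at most $m$, so the number of distinct states at any time point is at most $(m+1)^p$.

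This immediately yields both claims. Since $|\mathcal{T}| = O(n^2)$ and each transition—sliding the window and selecting which eligible jobs to start, exactly as in the original algorithm—can be carried out in polynomial time, the total running time is bounded by $(m+1)^p \cdot n^{O(1)}$, establishing membership in FPT for the combined parameter $(m,p)$. For the parameter $p$ alone, we may assume $m \le n$, as otherwise every job can be placed on its own machine and the instance is solved trivially; then the number of states is at most $(n+1)^p$ and the running time is $n^{O(p)}$, establishing membership in XP for the parameter $p$.

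I expect the main obstacle to lie in verifying that the length-$p$ window genuinely captures all the information the algorithm needs—namely that the feasibility of future starts depends only on the most recent job on each machine, all of which lie in the window—and that the transition used by Baptiste et al.\ can be expressed purely in terms of the window-based configuration, without reintroducing a dependence on the full schedule history. Once this is secured, the state bound $(m+1)^p$ follows directly and the two running-time bounds are immediate.
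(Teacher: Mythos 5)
Your state-space counting is in the same spirit as the paper's analysis, but the algorithm you attach it to is not the algorithm of Baptiste et al., and the correctness of your replacement is exactly what is missing. The paper keeps the original dynamic program unchanged: it is \emph{not} a left-to-right sweep over time points, but a job-indexed decomposition $W(k,a,b)$ over pairs of resource profiles $a,b\in\mathcal{X}$, where jobs are sorted by due date and the recursion of \cref{lem:dpcorr} guesses the profile $x$ at which job $k$ starts, splitting into the subproblems $W(k-1,a,x)$ and $W(k-1,x',b)$. The paper's proof of \cref{thm:xp-proctime} consists solely of re-bounding $|\mathcal{X}|$: fixing $x_1\in\mathcal{T}$ and noting each offset $x_i-x_1$ lies in $\{0,\ldots,p\}$ gives $|\mathcal{X}|\in O(n^2\cdot p^{m-1})$ (hence FPT in $m+p$), and, using sortedness, recording only the multiplicity of each of the $p+1$ offsets gives $|\mathcal{X}|\in O(n^2\cdot m^{p+1})$ (hence XP in $p$). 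So the claim in your transition step --- ``selecting which eligible jobs to start, exactly as in the original algorithm'' --- has no referent: the original algorithm has no such time-indexed transition.

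The gap you flagged in your final paragraph is genuine and, as stated, fatal: the window-based configuration records when machines become idle but not \emph{which jobs have already been scheduled}, and this history cannot be canonicalized away. With release dates, the early jobs in an optimal schedule need not appear in EDD order of starting times: on one machine, take job $1$ with $r_1=10$, $d_1=10+p$ and job $2$ with $r_2=0$, $d_2$ very large; the optimal schedule starts job $2$ at time $0$ and job $1$ at time $10$, with due dates \emph{decreasing} along the machine, and the swap is blocked by $r_1>0$. Hence a sweep cannot consume jobs in any fixed order as time advances, and at time $t_i$ the set of still-available jobs depends on past choices in a way your state does not capture; tracking remaining jobs per $(r_j,d_j,w_j)$-class would blow the state space beyond any function of $m$ and $p$ alone. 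The pair-of-profiles decomposition of Baptiste et al.\ exists precisely to circumvent this non-sequential structure. The repair is straightforward, though: discard the sweep and apply your counting argument --- your $(m+1)^p$ bound is essentially the paper's $O(n^2\cdot m^{p+1})$ bound, and your window-offset observation is the paper's $O(n^2\cdot p^{m-1})$ bound --- directly to the set $\mathcal{X}$ of resource profiles in the known-correct dynamic program behind \cref{thm:xp-machines}, whose correctness the paper inherits by citation. Your reduction to $m\le n$ for the XP claim is a sound observation (the paper glosses over it), but it does not rescue the sweep itself.
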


In order to prove \cref{thm:xp-proctime}, we present the dynamic programming algorithm for \probweighted by Baptiste et al.~\cite{baptiste2000scheduling,baptiste2004ten}. For the correctness of this algorithm, we refer to their work. We give an alternative running time analysis that shows the claimed containments in XP and FPT.

To this end, we need to introduce some additional notation and terminology. 
Recall that~$\mathcal{T}$ denotes the set of relevant starting time points. The algorithm makes use of \cref{lem:relevant}, that is, we can assume the starting times of all jobs in an optimal schedule are from $\mathcal{T}$. A \emph{resource profile} is a vector $x=(x_1,x_2,\ldots,x_m)$ with $x_1\le x_2\le \ldots\le x_m$, $x_m-x_1\le p$, and $x_i\in \mathcal{T}$ for all $i\in\{1,\ldots,m\}$. Let $\mathcal{X}$ denote the set of all resource profiles. Now we define the following dynamic program. We assume that the jobs are sorted according to their due dates, that is, $d_1\le d_2\le\ldots\le d_n$. 

For two resource profiles $a,b\in\mathcal{X}$ and some $k\in\{1,\ldots,n\}$ we define $W(k,a,b)$ to be the maximum weighted number of early jobs of any feasible schedule for the jobs~$1, \ldots, k$ such that
\begin{compactitem}
    \item sorting the starting times of the first jobs on each machine from smallest to largest yields a vector $a'$ with $a\le a'$, and
    \item sorting the completion times of the last jobs on each machine from smallest to largest yields a vector $b'$ with $b'\le b$,
\end{compactitem}
where for two vectors $a,b$ of length $m$ we say that $a\le b$ if and only if for all $i\in\{1,\ldots,m\}$ we have that $a_i\le b_i$.

From this definition, it follows that $W(n,(0,\ldots,0),(t_{\max},\ldots,t_{\max}))$, where $t_{\max}$ is the largest element in $\mathcal{T}$, is the maximum weighted number of early jobs of any feasible schedule.
Baptiste et al.~\cite{baptiste2000scheduling,baptiste2004ten} proved the following.
\begin{lemma}[\cite{baptiste2000scheduling,baptiste2004ten}]\label{lem:dpcorr}
    For all $k\in\{1,\ldots,n\}$ and all resource profiles $a,b\in\mathcal{X}$ with $a\le b$ it holds that $W(k,a,b)$ equals $W(k-1,a,b)$ if $r_k\notin[a_m-p,b_1)$ and otherwise
    \[
    \max\left(W(k-1,a,b),\max_{x\in\mathcal{X},r_k\le x_1, x_1+p\le d_k, a\le x,\atop x'=(x_2,x_3,\ldots,x_m,x_1+p)\le b} \Bigl(W(k-1,a,x)+W(k-1,x',b)+w_k\Bigr)\right),
    \]
    where we define $W(0,a,b)=0$ for all $a,b\in\mathcal{X}$ with $a\le b$.
\end{lemma}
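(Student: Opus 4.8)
The plan is to establish the recurrence by proving the two inequalities between $W(k,a,b)$ and the claimed expression, invoking \cref{lem:relevant} throughout so that every schedule under consideration has all starting times in $\mathcal{T}$; this guarantees that the optimum is attained by such schedules and that restricting the inner maximization to $x\in\mathcal{X}$ loses nothing. The whole argument hinges on the jobs being processed in nondecreasing order of due date, so that $d_k$ is the largest due date among $1,\ldots,k$. I would first dispose of the case distinction on $r_k$: one checks directly that if $r_k\notin[a_m-p,b_1)$, then placing job~$k$ early cannot improve over a schedule counted by $W(k-1,a,b)$, so the two values coincide. Here \cref{obs:releaseduedate} is convenient, since it makes the lower threshold $a_m-p$ and the upper threshold $b_1$ symmetric and lets me verify only one of them.

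For the direction $W(k,a,b)\le(\text{claimed expression})$ I would take a schedule $\sigma$ realizing $W(k,a,b)$. If job~$k$ is not among the early jobs, then $\sigma$ restricted to $1,\ldots,k-1$ witnesses $W(k-1,a,b)$. Otherwise job~$k$ is early, starting at some time $x_1$ on some machine; I would read off from $\sigma$ a cut profile $x\in\mathcal{X}$ whose entries record, machine by machine, a boundary separating the jobs finishing before job~$k$'s slot from those starting afterwards, so that the ``left'' part of $\sigma$ is a feasible schedule with completion profile $\le x$ and the ``right'' part has starting profile $\ge x'=(x_2,\ldots,x_m,x_1+p)$. The key point is that $x$ is a legitimate resource profile: using the uniform processing time together with \cref{lem:relevant}, the per-machine boundaries can be chosen inside $[x_1,x_1+p]\cap\mathcal{T}$, which gives $x_m-x_1\le p$; one also checks $r_k\le x_1$, $x_1+p\le d_k$, $a\le x$, and $x'\le b$. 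Bounding the left and right contributions by $W(k-1,a,x)$ and $W(k-1,x',b)$ and adding $w_k$ then yields a term in the maximum.

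For the reverse direction I would take the maximizing $x$ together with schedules attaining $W(k-1,a,x)$ and $W(k-1,x',b)$, place job~$k$ in the slot $[x_1,x_1+p]$ on the machine whose left frontier equals $x_1$, and splice the three pieces. Feasibility on each machine follows from the componentwise inequality $x'\ge x$, which holds because $x$ is sorted and $x_m\le x_1+p$: the right-hand jobs start no earlier than the left-hand jobs finish, and job~$k$ fits exactly between frontier $x_1$ and $x_1+p$; it then remains to verify that the combined starting and completion profiles stay within $a$ and $b$. I expect the genuine obstacle to lie precisely here: the two sub-schedules are optimized independently and may a priori select the same job on both sides, so the naive splice need not be a schedule on distinct jobs, and the summed value could in principle exceed what is realizable. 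Overcoming this is the delicate heart of the argument; I would attack it with an exchange argument that exploits the uniform processing times together with the maximality of $d_k$, showing that, because the windows $[a,x]$ and $[x',b]$ are time-separated, an optimal pair of sub-schedules can always be taken on disjoint job sets. Once disjointness is secured, the spliced schedule is feasible, counts each early job once, and attains $W(k-1,a,x)+w_k+W(k-1,x',b)$, completing the proof.
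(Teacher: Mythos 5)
First, note that the paper itself contains no proof of \cref{lem:dpcorr}: it is quoted from Baptiste et al.\ and the text explicitly defers correctness to their work, so your attempt must be measured against that source. You correctly localized the crux — the terms $W(k-1,a,x)$ and $W(k-1,x',b)$ range over the same job set $\{1,\ldots,k-1\}$, so splicing two independently optimal sub-schedules may schedule a job twice — but your proposed repair fails. Time-separation of the windows $[a,x]$ and $[x',b]$ does \emph{not} separate the jobs, because a job with an early release date and a late due date is eligible on both sides, and then no exchange argument can produce disjoint optimal sub-schedules. Concretely, take $m=1$, $p=1$, unit weights, and three jobs: job $1$ with $r_1=1$, $d_1=1$ (never early; it only serves to put $3=r_1+2p$ into $\mathcal{T}$), job $2$ with $r_2=0$, $d_2=4$, and job $3=k$ with $r_3=0$, $d_3=4$. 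For the state $a=(0)$, $b=(3)$ and the choice $x=(1)$, $x'=(2)$ (all side conditions of the inner maximum hold), your reading gives $W(2,(0),(1))=1$ (job $2$ at time $0$) and $W(2,(2),(3))=1$ (job $2$ at time $2$, completing at $3\le d_2$), so the right-hand side is at least $1+1+1=3$, while only jobs $2$ and $3$ can ever be early, so $W(3,(0),(3))=2$. Both sub-optima are \emph{forced} to use job $2$; disjointness is unattainable, and the splice direction of your proof is genuinely false as set up.

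The missing idea, which is how Baptiste et al.\ actually make the recurrence true, is that the state $(k,a,b)$ restricts the admissible job set by release date: only jobs $j\le k$ with $r_j\in[a_m-p,b_1)$ participate in the subproblem. This is also the only reading under which the guard ``$W(k,a,b)=W(k-1,a,b)$ if $r_k\notin[a_m-p,b_1)$'' is correct — under your reading, a job with $r_k<a_m-p$ and large $d_k$ (and $d_k$ is maximal among $1,\ldots,k$) can perfectly well be scheduled early inside the profile window, so your claim that this case ``checks directly'' already signals the missing restriction. With the windowed definition, the split at $x$ partitions the jobs deterministically: the left subproblem's window $[a_m-p,x_1)$ and the right subproblem's window $[x'_m-p,b_1)=[x_1,b_1)$ are disjoint, so double counting is excluded \emph{by definition} and the splice direction becomes routine (in my example above, job $2$ is excluded from the right subproblem since $r_2=0\notin[1,3)$, and the sum correctly drops to $2$). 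The exchange argument you invoke — using $p_j=p$ and the maximality of $d_k$ to swap job $k$ with an early job $j$ having $r_j<x_1$ that starts after the cut — is needed, but in the \emph{other} direction: in the upper bound, to argue that an optimal schedule can be assumed to start every early job with $r_j\ge x_1$ at or after the cut, so that restricting the schedule to the two sides respects the two release-date windows. Your cut-profile construction and the feasibility checks for the splice are otherwise sound, but without amending the definition of $W$ the proof cannot be completed.
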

A straightforward running time analysis yields the following. We have that $|\mathcal{T}|\in O(n^2)$ and hence $|\mathcal{X}|\in O(n^{2m})$. It follows that the size of the dynamic programming table $W$ is in $O(n^{4m+1})$ and the time to compute one entry is in $O(n^{2m})$. This together with \cref{lem:dpcorr} yields \cref{thm:xp-machines}. In the remainder of the section, we give an alternative running time analysis to prove \cref{thm:xp-proctime}.
\begin{proof}[Proof of \cref{thm:xp-proctime}]
To prove \cref{thm:xp-proctime} we give a different bound for the size of $\mathcal{X}$. Recall that for all resource profiles $x\in\mathcal{X}$ we have that $x_m-x_1\le p$. It follows that there are $|\mathcal{T}|$ possibilities for the value of $x_1$, and then for $2\le i\le m$ we have that $x_1\le x_i\le x_1+p$. Hence, we get that $|\mathcal{X}|\in O(n^2\cdot p^{m-1})$. This together with \cref{lem:dpcorr} immediately gives us that \probweighted is in FPT when parameterized by $m+p$.

Furthermore, we have $x_1\le x_2\le \ldots\le x_m$. Hence, the resource profile $x$ can be characterized by counting how many times a value $t\in \mathcal{T}$ appears in $x$. Again, we can exploit that $x_m-x_1\le p$. There are $|\mathcal{T}|$ possibilities for the value of $x_1$ and given $x_1$, each other entry $x_i$ of $x$ with $2\le i\le m$ can be characterized by the amount $0 \le y_i=x_i-x_1\le p$ by which it is larger than $x_1$. Clearly, there are $p + 1$ different possible values for the amount~$y_i$. It follows that given $x_1$, we can characterize $x$ by counting how often a value $0\le t\le p$ appears as an amount. Hence, we get that $|\mathcal{X}|\in O(n^2\cdot m^{p+1})$.  This together with \cref{lem:dpcorr} immediately gives us that \probweighted is in XP when parameterized by $p$.
\end{proof}

Lastly, we remark that with similar alternative running time analyses for the dynamic programming algorithm by Baptiste et al.~\cite{baptiste2000scheduling,baptiste2004ten}, one can show that \probweighted is in XP when parameterized by the number of release dates or due dates, and that \probweighted is in FPT when parameterized by the combination of the number of machines and the number of release dates or due dates. However, as we show in the next section, we can obtain fixed-parameter tractability by parameterizing only by the number of release dates or parameterizing only by the number of due dates.

\section{FPT-Algorithm for \boldmath\probweighted}\label{sec:MILP}
\appendixsection{sec:MILP}

In this section, we present a new FPT algorithm for \probweighted parameterized by the number of release dates or due dates.
Formally, we show the following.

\begin{theorem}\label{thm:FPT}
    \probweighted\ is in FPT when parameterized by the number $r_\#$ of release dates or the number $d_\#$ of deadlines.
\end{theorem}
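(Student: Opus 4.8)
The plan is to prove the statement only for the parameter $r_\#$; the case of the parameter $d_\#$ then follows immediately from \cref{obs:releaseduedate}, which swaps the roles of release dates and due dates and therefore turns an instance with $d_\#$ distinct due dates into an equivalent instance with $d_\#$ distinct release dates. So from now on I assume that the distinct release dates are $\rho_1 < \rho_2 < \dots < \rho_{r_\#}$, and the goal is to formulate the problem as a mixed integer linear program (MILP) in which the number of integer variables is bounded by a function of $r_\#$ and whose continuous part is a transportation-type program. The theorem then follows from the known fact that MILPs with a bounded number of integer variables can be solved in polynomial time.

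First I would establish a structural lemma describing what a machine can be doing at each release date. By \cref{lem:relevant} we may assume every job starts at a time in $\mathcal{T}$, that is, at a time of the form $\rho_{a'} + \ell \cdot p$. Hence any machine that is busy at time $\rho_a$ started its current job in the open interval $(\rho_a - p, \rho_a)$ and completes it at a time whose residue modulo $p$ equals $\rho_{a'} \bmod p$, where $\rho_{a'}$ is the release date at which that job became available. Consequently the completion offset of a busy machine past $\rho_a$ takes at most $r_\#$ distinct values. I would therefore summarize the configuration of the machines at each release date $\rho_a$ by a vector recording, for each of these at most $r_\#$ offsets, how many machines carry a job finishing at that offset. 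This yields $O(r_\#^2)$ integer variables in total, each ranging in $\{0,\dots,m\}$.

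With the integer variables fixed, the remaining decisions are which jobs to make early and where to place them. I would introduce continuous variables counting, for each release-date/due-date/weight type, how many jobs of that type are selected, together with variables distributing the selected jobs over the time slots created between consecutive release dates (and in the final tail after $\rho_{r_\#}$, where all jobs are available). Linear constraints tie the entering and exiting machine configurations across each release date to the number of jobs started in the intervening interval, encode that a job may occupy a slot only if it is released by the slot's start and its due date is not exceeded by the slot's completion, and bound the number selected of each type by its availability; the objective maximizes the total weight of selected jobs. The crucial point is that, once the integer configuration vectors are fixed, this residual program is a transportation problem between job types and time slots, so its constraint matrix is a network matrix and hence totally unimodular. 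Thus its LP optimum is attained at an integral vertex and correctly counts jobs, and feasible schedules correspond to feasible MILP solutions with the same weighted number of early jobs.

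The main obstacle is the structural lemma and the precise bookkeeping it enables: one must show that an optimal schedule can be brought into a canonical form in which idle time occurs only while waiting for release dates, so that the behaviour of the machines between two consecutive release dates is fully determined by the entering and exiting configuration vectors together with the integral counts of newly started jobs, and that no feasibility is lost by tracking only the $O(r_\#)$ offsets per release date. Once this is in place, checking total unimodularity of the fixed-configuration program and verifying that the MILP has polynomially many continuous variables and constraints is routine, and the theorem follows by applying the polynomial-time algorithm for MILPs with a bounded number of integer variables.
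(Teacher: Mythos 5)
Your proposal follows the same overall strategy as the paper's proof: reduce the due-date case to the release-date case via \cref{obs:releaseduedate}, normalize schedules so that idle time occurs only immediately before a release date (your structural lemma is exactly the paper's \cref{lem:releasedates}, proved by the same left-shifting argument on top of \cref{lem:relevant}), and then encode the problem as an MILP whose integer variables describe machine behaviour, whose continuous part is a bipartite assignment of jobs to relevant starting slots, solved via \cref{thm:MILP}, with integrality of the continuous part guaranteed by total unimodularity through \cref{lem:MILP}. Where you genuinely diverge is the integer encoding. The paper assigns each machine a global \emph{type} $T$, namely the subset of release dates at which a fresh chain of back-to-back jobs begins, and uses one integer counter $x_T$ per type, i.e., $O(2^{r_\#})$ integer variables; the payoff is that the capacity of each slot $t$ is a fixed linear function $\sum_T t_T\cdot x_T$ of the type counters, no consistency constraints between consecutive release dates are needed, and every continuous column then has exactly two 1's (one slot-capacity row, one job row), so total unimodularity is immediate. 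You instead track, at each release date, how many machines are busy with each of the at most $r_\#$ completion offsets (residues of release dates modulo $p$), using only $O(r_\#^2)$ integer counters; since Lenstra-type algorithms run in time roughly $k^{O(k)}$ in the number $k$ of integer variables, this would improve the dependence on $r_\#$ from double-exponential to exponential in a polynomial of $r_\#$, a genuine quantitative gain over the paper's encoding.

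The one delicate point, which you flag as ``the main obstacle'' but do not resolve, is how to ``tie the entering and exiting machine configurations across each release date to the number of jobs started in the intervening interval.'' If these tying constraints literally couple the continuous slot-assignment variables to the configuration counters, the continuous columns acquire a third nonzero entry and the two-ones-per-column criterion for total unimodularity breaks down, so \cref{lem:MILP} no longer applies as stated. This is repairable: introduce integer \emph{transition} variables counting, at each release date, the machines moving from each state (idle, or busy with a given offset) to each successor state --- since between consecutive release dates a new chain can start only at the release date itself, only $O(r_\#)$ transitions per state are possible, keeping the integer count polynomial in $r_\#$ --- impose flow conservation purely among these integer variables, and let each slot capacity be a linear expression in the transition counts appearing only on the right-hand side of its capacity row. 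One should further observe that ``busy'' claims need not be certified by an actually assigned spanning job: an over-claimed busy machine is in reality idle, and an idle machine can serve the granted offset slots anyway (the reconstructed schedule merely contains extra idle gaps, which is harmless for feasibility, exactly as in the paper's \cref{lem:MILPcorr2} where leftover jobs are scheduled late), so soundness survives without coupling claims to job starts. With these adjustments your encoding goes through, and the remaining ingredients --- release/due-date-compatible slot variables, the availability and uniqueness constraints, the weighted objective, and the two correctness directions --- match the paper's argument.
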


We prove \cref{thm:FPT} for the case where we parameterize by the number $r_\#$ of release dates. By \cref{obs:releaseduedate}, the case for the number $d_\#$ of deadlines is symmetric.
We present a reduction from \probweighted\ to \textsc{Mixed Integer Linear Program} (\textsc{MILP}).

\optprob{\textsc{Mixed Integer Linear Program} (\textsc{MILP})}{A vector $x$ of $n$ variables, a subset~$S$ of the variables which are considered integer variables, a constraint matrix $A\in\mathbb{R}^{m\times n}$, and two vectors $b\in\mathbb{R}^m$, $c\in \mathbb{R}^n$.}{Compute an assignment to the variables (if one exists) such that all integer variables in $S$ are set to integer values, $Ax\le b$, $x\ge 0$, and $c^\intercal x$ is maximized.}

We give a reduction that produces an \textsc{MILP} instance with a small number of integer values. More precisely, the number of integer values will be upper-bounded by a function of the number of release dates of the \probweighted instance. This allows us to upper-bound the running time necessary to solve the \textsc{MILP} instance using the following well-known result.
\begin{theorem}[\cite{dadush2011enumerative,Lenstra1983Integer}]\label{thm:MILP}
    \textsc{MILP} is in FPT when parameterized by the number of integer variables.
\end{theorem}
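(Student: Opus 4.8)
The plan is to prove \cref{thm:MILP} via the classical lattice-based approach of Lenstra, organized in three layers: reducing optimization to feasibility, eliminating the continuous variables so that only the $k := |S|$ integer variables survive, and then deciding integer feasibility in fixed dimension $k$ by a recursive ``flatness'' argument whose branching and depth depend only on $k$. For the first layer, I would reduce the optimization version to a polynomial number of feasibility queries. The feasible region $P = \{x \ge 0 : Ax \le b\}$ is a rational polyhedron whose vertices have polynomially bounded bit-length by Cramer's rule; hence, if the optimum of $c^\intercal x$ over the $S$-integral feasible points is finite, its value is a rational with polynomially bounded numerator and denominator (unboundedness and infeasibility are detected directly by LP solving). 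Adding the cut $c^\intercal x \ge \gamma$ and binary searching over $\gamma$ thus reduces the task to $|x|^{O(1)}$ calls to the following feasibility problem: given a rational polyhedron $Q \subseteq \mathbb{R}^n$ and a set $S$ of $k$ coordinates, decide whether $Q$ contains a point that is integral on the coordinates in $S$.

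Next I would remove the continuous variables. Let $\pi \colon \mathbb{R}^n \to \mathbb{R}^{k}$ be the projection onto the coordinates in $S$. Then $Q$ contains an $S$-integral point if and only if the $k$-dimensional polyhedron $\pi(Q)$ contains an integer point, since an integer $z \in \mathbb{Z}^k \cap \pi(Q)$ is precisely a choice of the integer variables for which the remaining continuous variables can be completed to a point of $Q$. Crucially, I would not compute $\pi(Q)$ explicitly, as Fourier--Motzkin elimination could blow up the constraint count; instead I access $\pi(Q)$ through a \emph{linear optimization oracle}: maximizing any linear functional over $\pi(Q)$ is exactly the linear program of maximizing the corresponding lifted functional (ignoring the non-$S$ coordinates) over $Q$, solvable in polynomial time. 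This reduces the entire problem to deciding whether a convex body $K = \pi(Q) \subseteq \mathbb{R}^k$, presented by an efficient LP oracle, contains a lattice point of $\mathbb{Z}^k$.

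The heart of the argument is then the fixed-dimension integer-feasibility routine. After reducing to the bounded, full-dimensional case, I would combine the LP oracle with the ellipsoid method to compute a linear transformation under which $K$ is sandwiched between concentric balls whose radius ratio depends only on $k$, and then apply lattice basis reduction (LLL) to the image lattice to obtain a reduced basis. By Khinchin's flatness theorem there is a constant $f(k)$, depending only on the dimension, such that either $K$ contains a lattice point---which rounding the ellipsoid center exhibits directly---or there is a nonzero lattice direction $d^\ast$ in which the width $\max_{x\in K}\langle d^\ast, x\rangle - \min_{x\in K}\langle d^\ast, x\rangle$ of $K$ is at most $f(k)$. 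In the latter case every lattice point of $K$ lies in one of at most $f(k)+1$ parallel hyperplanes $\{x : \langle d^\ast, x\rangle = c\}$ with $c \in \mathbb{Z}$, and each such slice is an integer-feasibility instance in dimension $k-1$, again accessible by an LP oracle. Recursing, the depth is $k$ and the branching factor at level $i$ is $O(f(i))$, so the total number of leaves is bounded by a function of $k$ alone, while each node performs only polynomially many oracle calls and one LLL reduction. Altogether the running time is $g(k)\cdot |x|^{O(1)}$, establishing membership in FPT; the cited works~\cite{dadush2011enumerative,Lenstra1983Integer} give the sharpest known dependence on $k$.

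I expect the main obstacle to be the geometric core: turning the flatness theorem into a clean recursion in which the number of integer variables genuinely drops by one at each level while the subproblems remain of the same type (integer feasibility over an LP-oracle-accessible convex body), and extracting the short covering direction $d^\ast$ via basis reduction so that the branching factor stays bounded by a function of $k$ only. Folding the unbounded and lower-dimensional cases of $K$ back into this recursion, and verifying that all the oracle-based primitives (ellipsoidal rounding, width computation, slicing) remain polynomial in the original input size, are the remaining technical points to get right.
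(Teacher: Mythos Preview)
The paper does not prove \cref{thm:MILP} at all; it is stated as a known result with citations to Lenstra~\cite{Lenstra1983Integer} and Dadush et al.~\cite{dadush2011enumerative}, and is used purely as a black box in the FPT algorithm of \cref{sec:MILP}. So there is no ``paper's own proof'' to compare against.

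That said, your sketch is a faithful outline of the classical Lenstra approach and is essentially how the cited references establish the result: reduce optimization to feasibility by binary search, project out the continuous variables (accessing the projection implicitly via LP oracles rather than explicit Fourier--Motzkin), and decide integer feasibility in fixed dimension by the flatness/basis-reduction recursion. The technical caveats you flag---handling unbounded or lower-dimensional $K$, ensuring all ellipsoid/LLL primitives run in time polynomial in the original encoding, and keeping the branching factor a function of $k$ only---are exactly the points one must verify, and they are handled in the cited works. One minor wrinkle worth making explicit is that the flatness theorem as usually stated applies to \emph{bounded} convex bodies, so before invoking it you must argue that if $\pi(Q)$ is unbounded and contains no integer point then its recession cone is contained in a proper rational subspace, which again drops the dimension; this is standard but should be mentioned for completeness.
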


Furthermore, we construct the \textsc{MILP} instance in a way that ensures that there always exist optimal solutions where \emph{all} variables are set to integer values. Informally, we ensure that the constraint matrix for the rational variables is totally unimodular\footnote{A matrix is \emph{totally unimodular} if each of its square submatrices has determinant $0$, $1$, or $-1$~\cite{dantzig1956linear}.}. This allows us to use the following result.

\begin{lemma}[\cite{CMZ24}]\label{lem:MILP}
    Let $A_{\text{frac}} \in \mathbb{R}^{m \times n_2}$ be totally unimodular.
    Then for any $A_{\text{int}} \in \mathbb{R}^{m \times n_1}$, $b \in \mathbb{R}^m$, and $c \in \mathbb{R}^{n_1 + n_2}$, the MILP
    \[
    \max c^\intercal x \text{ subject to } (A_{\text{int}} \ A_{\text{frac}})x\le b, x\ge 0,
    \]
    where $x = (x_{\text{int}} \ x_{\text{frac}})^\intercal$ with the first $n_1$ variables (i.e., $x_{\text{int}}$) being the integer variables, has an optimal solution where all variables are integer.
\end{lemma}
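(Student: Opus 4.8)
The plan is to freeze the integer variables at their optimal integral values and show that the residual linear program over the continuous variables is solved at an integral vertex; total unimodularity of $A_{\text{frac}}$ is exactly what forces the vertices to be integral. Concretely, let $(x_{\text{int}}^*, x_{\text{frac}}^*)$ be any optimal solution of the MILP (one exists by hypothesis), so in particular $x_{\text{int}}^*$ is integral. Substituting $x_{\text{int}} = x_{\text{int}}^*$, the continuous variables are constrained by the pure linear program
\[
\max\; c_{\text{frac}}^\intercal x_{\text{frac}} \quad\text{subject to}\quad A_{\text{frac}} x_{\text{frac}} \le b', \; x_{\text{frac}} \ge 0,
\]
where $b' = b - A_{\text{int}} x_{\text{int}}^*$ and $c_{\text{frac}}$ is the subvector of $c$ indexing the continuous variables. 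Since $(x_{\text{int}}^*, x_{\text{frac}}^*)$ is optimal for the MILP, the vector $x_{\text{frac}}^*$ must be optimal for this LP; in particular the LP value is finite and its feasible region $P = \{x_{\text{frac}} \ge 0 : A_{\text{frac}} x_{\text{frac}} \le b'\}$ is nonempty.

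Next I would invoke standard LP theory to locate the optimum at a vertex. Because of the sign constraints $x_{\text{frac}} \ge 0$, the polyhedron $P$ is contained in the nonnegative orthant and therefore contains no line, so it is pointed and has at least one vertex. A linear program with a finite optimal value over a pointed polyhedron attains its optimum at a vertex, so it remains to show every vertex of $P$ is integral. A vertex of $P$ is the unique solution of a nonsingular system of $n_2$ linearly independent active constraints, each of which is either a row of $A_{\text{frac}} x_{\text{frac}} = b'$ or an equality $(x_{\text{frac}})_j = 0$; thus the coefficient matrix is a square nonsingular submatrix $B$ of $\bigl(\begin{smallmatrix} A_{\text{frac}} \\ -I_{n_2}\end{smallmatrix}\bigr)$. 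Appending (negated) identity rows preserves total unimodularity, so $B$ is a nonsingular submatrix of a totally unimodular matrix and hence $\det B \in \{+1,-1\}$; by Cramer's rule the vertex $B^{-1} d$ is integral provided the corresponding right-hand side $d$ — built from entries of $b'$ and zeros — is integral. (This is precisely the Hoffman--Kruskal integrality phenomenon for totally unimodular systems with integral right-hand side.)

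The one place where I expect the hypotheses to be genuinely needed beyond total unimodularity, and hence the main point to pin down, is the integrality of $b' = b - A_{\text{int}} x_{\text{int}}^*$. This holds whenever $b$ and $A_{\text{int}}$ are integral: together with the integrality of $x_{\text{int}}^*$ this makes $A_{\text{int}} x_{\text{int}}^*$ and therefore $b'$ integral, so $d$ is integral and every vertex of $P$ is integral. In the scheduling application all data (release dates, due dates, weights, processing time) are nonnegative integers, so this integrality is automatic. Granting it, the LP has an integral optimal solution $\hat{x}_{\text{frac}}$ with $c_{\text{frac}}^\intercal \hat{x}_{\text{frac}} = c_{\text{frac}}^\intercal x_{\text{frac}}^*$, and then $(x_{\text{int}}^*, \hat{x}_{\text{frac}})$ is a fully integral vector achieving the same objective value as the optimal solution we began with. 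Hence it is itself an optimal solution of the MILP in which all variables are integral, which is exactly the claim of \cref{lem:MILP}.
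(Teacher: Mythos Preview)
The paper does not prove this lemma; it is quoted from~\cite{CMZ24} and used as a black box, so there is no in-paper argument to compare against. Your proof is the standard and correct one: freeze the integer part at its optimal value, note that the residual LP in $x_{\text{frac}}$ is over a pointed polyhedron (because of $x_{\text{frac}}\ge 0$) with finite optimum, and use that appending $-I_{n_2}$ to $A_{\text{frac}}$ preserves total unimodularity, so every vertex of the residual polyhedron is integral by Cramer's rule whenever the right-hand side is integral. Picking an optimal vertex then yields a fully integral MILP optimum.

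You also correctly put your finger on the one real gap in the lemma \emph{as stated}: the paper allows $A_{\text{int}}\in\mathbb{R}^{m\times n_1}$ and $b\in\mathbb{R}^m$, but your Cramer step needs $b' = b - A_{\text{int}}x_{\text{int}}^*$ to be integral, which in turn requires $b$ and $A_{\text{int}}$ to be integral. Without that hypothesis the claim is simply false (e.g.\ $n_1=0$, $n_2=1$, $A_{\text{frac}}=(1)$, $b=\tfrac{1}{2}$, $c=1$ has unique optimum $x_{\text{frac}}=\tfrac{1}{2}$). In the paper's application all coefficients and right-hand sides of the MILP constructed in \cref{sec:MILP} are integers, so the intended use is sound; your observation that an integrality assumption on the data is implicitly needed is accurate and worth recording.
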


Before we describe how to construct an MILP instance for a given instance of \probweighted, we make an important observation on optimal schedules.
    Intuitively, we show that we can assume that each job is scheduled as early as possible and idle times only happen directly before release dates.
\begin{lemma}\label{lem:releasedates}
    Let $\sigma$ be a feasible schedule for an instance of \probweighted such that the weighted number of early jobs is $W$. Then there exists a feasible schedule $\sigma'$ such that
    \begin{itemize}
        \item the weighted number of early jobs is $W$,
        \item for each job $j$ with $\sigma'(j)=(i,t)$ for some $t \neq r_j$, machine $i$ is not idle at time $t-1$, and
        \item all starting times of $\sigma'$ are in the set~$\mathcal{T}$ of relevant starting time points.
    \end{itemize}
\end{lemma}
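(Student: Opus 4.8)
The plan is to keep the set of early jobs of $\sigma$ fixed and to rearrange the jobs on each machine so that no job is preceded by avoidable idle time; fixing the early set is exactly what will keep the weighted number of early jobs equal to $W$. Let $E$ be the set of jobs that are early in $\sigma$, so that $W=\sum_{j\in E}w_j$. It therefore suffices to produce a feasible schedule whose early set is exactly $E$, in which every job starts either at its release date or immediately after another job on its machine, and whose starting times all lie in $\mathcal{T}$.

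The engine of the construction is a single local operation, the \emph{left shift}: if a job $j$ with current start $(i,t)$ satisfies $t\neq r_j$ and machine $i$ is idle at time $t-1$, then I reassign $j$ to start at $\max(r_j,c)$, where $c$ is the completion time of the job immediately preceding $j$ on machine $i$ (and $c=r_j$ if there is none). This never moves $j$ before its release date, so feasibility is preserved, and afterwards $j$ starts either at $r_j$ or exactly at its predecessor's completion time, so there is no idle directly before $j$. Termination is immediate from a potential argument: the sum over all jobs of their starting times is a nonnegative integer that strictly decreases with every left shift. At a fixed point every job starts at its release date or at its predecessor's completion time; the former satisfies the idle condition vacuously and the latter makes the machine busy at the instant before the job, so the idle condition holds. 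The same fixed-point structure yields the third property: within each maximal block of back-to-back jobs on a machine, the first job starts at a release date, which lies in $\mathcal{T}$, and every later start is obtained by repeatedly adding $p$, so it lies in $\mathcal{T}$ as well.

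Since a left shift only ever moves a job to an earlier completion time, every job of $E$ remains early, so the weighted number of early jobs never drops below $W$. The delicate direction---and what I expect to be the main obstacle---is to ensure it does not rise above $W$, i.e.\ that the compaction does not turn a previously tardy job early. I would control this by treating the two kinds of jobs separately. First I apply left shifts only to the jobs of $E$, on the machines they use in $\sigma$, ignoring the tardy jobs; as these jobs already finish by their due dates, this yields a compact sub-schedule whose early set is still exactly $E$. Then I reinsert the tardy jobs, placing each one so that it starts either at its own release date or immediately behind another job with no intervening idle, while reassigning tardy jobs across machines and grouping them into packed blocks as needed so that every reinserted job still completes after its due date.

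The heart of the argument is this reinsertion step: the tardy jobs must be slotted into the compact skeleton of early jobs so that feasibility and the idle condition hold while, at the same time, each inserted job stays tardy, which is what keeps the early weight exactly $W$. This is precisely where the requirements ``no idle directly before a job'' and ``the job must finish after its due date'' are in tension, and tardy jobs with large slack $d_j-r_j$ need the most care; resolving this---by appending such jobs into packed blocks that begin at a carefully chosen release date and verifying that each of them indeed completes after its due date---is the step I would spend the most effort on.
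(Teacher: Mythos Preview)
Your left-shift operation is exactly the engine the paper uses, and your termination and ``starting times in $\mathcal{T}$'' arguments are fine. The paper's proof, however, is much shorter: it picks any job $j$ with $\sigma(j)=(i,t)$, $t>r_j$, and machine $i$ idle at $t-1$, moves $j$ one unit to the left, and iterates---no two-phase decomposition and no separate handling of tardy jobs. It then simply asserts that the resulting schedule has the same set of early jobs.

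The concern you flag---that a previously tardy job might become early after shifting, pushing the early weight above $W$---is legitimate, and the paper is indeed casual on this point. But your proposed remedy (compact the early jobs first, then reinsert the tardy jobs so that each remains tardy while still satisfying the idle condition) cannot be completed in general, because the lemma \emph{as literally stated} is false for non-optimal $\sigma$. Take one machine and a single job with $r_1=0$, $d_1=p$, and schedule it to start at time $1$; then $W=0$, yet every release-date-aligned schedule must start this sole job at time $0$, making it early. There is no way to ``reinsert'' it tardy without leaving idle time directly before it. So the step you single out as the one to spend the most effort on is not merely hard---it is impossible.

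The correct resolution is not to repair the reinsertion but to read the lemma in its intended use. The left-shift procedure, applied to \emph{all} jobs (tardy ones included), yields a release-date-aligned schedule whose weighted number of early jobs is at least $W$; that is what the argument actually establishes. Since the lemma is only ever applied to an optimal $\sigma$, the count cannot strictly increase, and one obtains exactly $W$. So drop the two-phase plan: shift everything left, observe that the early set can only grow, and invoke optimality.
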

\begin{proof}
    Assume that there is a job $j$ such that $\sigma(j)=(i,t)$ for some $t < r_j$ and machine $i$ is idle at time $t-1$. Assume that job $j$ is the earliest such job, that is, the job with minimum~$t$. Since machine $i$ is idle at time $t-1$ and $r_j<t$, we can create a new schedule $\sigma'$ that is the same as $\sigma$ except that $\sigma'(j)=(i,t-1)$. Clearly, we have that $\sigma'$ is feasible and has the same weighted number of early jobs as $\sigma$. By repeating this process, we obtain a feasible schedule $\sigma''$ with the same set of early jobs and such that for each job $j$ with $\sigma''(j)=(i,t)$ and machine $i$ is idle at time $t-1$, it holds that $t=r_j$.
    Furthermore, we have that each starting point in $\sigma''$ is a release date $r$ or a time $t$ with $t=r+\ell\cdot p$ for some release date $r$ and some integer $\ell$. Hence, all starting times of $\sigma''$ are in the set~$\mathcal{T}$ of relevant starting time points.
\end{proof}
We call a feasible schedule $\sigma$ \emph{release date aligned} if the second condition of \Cref{lem:releasedates} holds, i.e., for each job $j$ with $\sigma(j)=(i,t)$ for some $t > r_j$, the machine $i$ is idle at time $t-1$. Note that \cref{lem:releasedates} is stronger than \cref{lem:relevant} and implies that there always exists an optimal feasible schedule that is release date aligned.

    Given a feasible schedule $\sigma$ that is release date aligned, we say that a release date $r_j$ is \emph{active} on machine $i$ if job $j$ is scheduled to starts at this release date, that is, $\sigma(j)=(i,r_j)$, and machine $i$ is idle at time $r_j-1$. Let $T$ be a subset of all release dates, then we say that machine $i$ has type $T$ in $\sigma$ if all release dates in $T$ are active on machine $i$.

Recall that~$\mathcal{T}=\{t\mid \exists \ r_j \text{ and } \exists \ 0\le \ell\le n \text{ s.t.\ } t=r_j+p\cdot \ell\}$ denotes the set of relevant starting time points.
 We say that a starting time $t\in\mathcal{T}$ is available on a machine with type $T$ if $t=r+ \ell\cdot p$ for some $r\in T$ and $t+p\le r'$, where $r'$ is the smallest release date in $T$ that is larger than $r$.
    
Given an instance $I$ of \probweighted, we create an instance $I'$ of MILP as follows.
     For each type $T$ we create an integer variable $x_T$ that quantifies how many machines have type $T$ and create the constraint
    \begin{equation}\label{const:1}
        \sum_T x_T\le m.
    \end{equation}
    
    For each starting time $t\in\mathcal{T}$ we create a fractional variable $x_t$ that quantifies on how many machines the starting time $t$ is available and create the following set of constraints.
    \begin{equation}\label{const:2}
        \forall \ t \in \mathcal{T}: \ x_t=\sum_T t_T\cdot x_T,
    \end{equation}
    where $t_T=1$ if starting time $t$ is available on a machine with type $T$ and $t_T=0$ otherwise.

    For each combination of a job $j$ and a starting time $t$, we create a fractional variable~$x_{j,t}$ if job~$j$ can be scheduled to start at time $t$ without violating $j$'s release date or due date. This variable indicates whether job $j$ is scheduled to start at starting time $t$ and is early. We create the following constraints.
    \begin{align}
        & \forall \ t\in\mathcal{T} : \ \sum_j x_{j,t} \le x_t. \label{const:3} \\
       & \forall \ j\in\{1,\ldots,n\} : \ \sum_t x_{j,t} \le 1. \label{const:4}
    \end{align}

    Finally, we use the following function as the maximization objective.
    \begin{equation}\label{obj}
        \sum_{j,t} w_j \cdot x_{j,t}
    \end{equation}
This finishes the construction of the MILP instance $I'$. We can observe the following.
\begin{observation}\label{obs:MILPsize}
    Given an instance $I$ of \probweighted, the above described MILP instance $I'$ can be computed in $O(2^{r_\#}\cdot (m+n)^2)$ time and has $O(2^{r_\#})$ integer variables.
\end{observation}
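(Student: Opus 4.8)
The plan is to prove the two claims of \Cref{obs:MILPsize} — the bound on the number of integer variables and the bound on the construction time — separately, as both reduce to careful counting against the construction described just above the observation.

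For the number of integer variables, I would observe that the only integer variables introduced are the variables $x_T$, one per \emph{type} $T$. Since a type is by definition a subset of the set of release dates, and there are exactly $r_\#$ distinct release dates, there are at most $2^{r_\#}$ types and hence $O(2^{r_\#})$ integer variables. This is the key quantity, as it is what lets us later invoke the fixed-parameter tractability of \textsc{MILP} (\Cref{thm:MILP}) and the integrality guarantee of \Cref{lem:MILP}.

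For the running time, I would first record the bound $|\mathcal{T}|\le r_\#\cdot(n+1)=O(r_\#\cdot n)$, which follows immediately from the definition of $\mathcal{T}$ (there are $r_\#$ distinct release dates and $n+1$ admissible choices of $\ell$). This bounds the number of fractional variables right away: there are $|\mathcal{T}|=O(r_\# n)$ variables $x_t$ and at most $n\cdot|\mathcal{T}|=O(r_\# n^2)$ variables $x_{j,t}$. I would then walk through the four groups of constraints and the objective, bounding for each the number of nonzero coefficients together with the time to generate them: the machine-count constraint contributes $O(2^{r_\#})$ terms; the availability constraints contribute the coefficients $t_T$, of which there are at most $|\mathcal{T}|$ per type and hence $O(2^{r_\#}\cdot r_\# n)$ in total; and the packing constraints, the assignment constraints, and the objective each contribute $O(r_\# n^2)$ terms. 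To produce the $t_T$ table efficiently I would, for each of the $2^{r_\#}$ types, sort its release dates and then, for each consecutive pair, enumerate the available starting times directly, so that the whole table is built in time proportional to its size plus $O(r_\#)$ overhead per type.

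Finally, I would collect these contributions and argue that each is absorbed into $O(2^{r_\#}(m+n)^2)$, using the two elementary facts $r_\#\le n$ (so $r_\# n\le n^2\le(m+n)^2$) and $r_\#\le 2^{r_\#}$ (so polynomial factors such as $r_\# n^2$ are dominated by $2^{r_\#}n^2$). I do not expect a genuine obstacle here, since this is a bookkeeping statement; the one place that needs care is the availability constraints, as they are the only part whose size truly scales with $2^{r_\#}$. There I would make sure the count of nonzero $t_T$ entries, and the time to compute them, stays at $O(2^{r_\#}\cdot r_\# n)$ and does not pick up a spurious extra factor of $r_\#$ or $n$ that would push it past the claimed bound.
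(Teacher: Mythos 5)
Your counting is correct and matches what the paper intends: the observation is stated without an explicit proof precisely because it follows by the direct bookkeeping you carry out — the only integer variables are the $x_T$ with $T$ ranging over the at most $2^{r_\#}$ subsets of release dates, and with $|\mathcal{T}|\le r_\#(n+1)$ every group of variables, constraints, and coefficients (including the $t_T$ table) fits into $O(2^{r_\#}(m+n)^2)$ via $r_\#\le n$ and $r_\#\le 2^{r_\#}$. No gaps; your attention to the availability coefficients, the only part scaling with $2^{r_\#}$, is exactly the right place to be careful.
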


In the following, we prove the correctness of the reduction to MILP. We start with showing that if the \probweighted instance admits a feasible schedule where the weighted number of early jobs is $W$, then the constructed MILP instance admits a feasible solution that has objective value $W$.

\begin{lemma}
\label{lem:MILPcorr1}
    If the \probweighted instance $I$ admits a feasible schedule where the weighted number of early jobs is $W$, then the MILP instance $I'$ admits a feasible solution that has objective value $W$.
\end{lemma}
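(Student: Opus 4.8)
The plan is to read off an MILP solution directly from a feasible schedule, after first normalizing it with \cref{lem:releasedates}. First I would apply \cref{lem:releasedates} to the given schedule to obtain a release-date-aligned feasible schedule $\sigma$ with the same weighted number $W$ of early jobs and with all starting times in $\mathcal{T}$. For each machine $i$, I would let $T_i$ denote the set of release dates that are active on $i$ in $\sigma$, and set $x_T := |\{i : T_i = T\}|$ for every type $T$. As each machine is assigned exactly one type, $\sum_T x_T = m$ and so \cref{const:1} is satisfied. The variables $x_t$ are then dictated by \cref{const:2}; I would set $x_t := \sum_T t_T \cdot x_T$, which by definition of $t_T$ is exactly the number of machines on which the starting time $t$ is available.

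Next I would fix the job-assignment variables: for every \emph{early} job $j$ with $\sigma(j) = (i,t)$ I set $x_{j,t} := 1$, and put all remaining $x_{j,t} := 0$. For such a pair the variable $x_{j,t}$ exists, since an early job satisfies $r_j \le t$ and $t + p \le d_j$. All variables are nonnegative, the objective \cref{obj} evaluates to $\sum_{\text{early } j} w_j = W$, and \cref{const:4} holds because each job occupies a single starting time in $\sigma$, giving $\sum_t x_{j,t} \le 1$.

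The main obstacle is \cref{const:3}, i.e.\ showing $\sum_j x_{j,t} \le x_t$ for every $t \in \mathcal{T}$. The crucial structural claim is that whenever a job starts at time $t$ on machine $i$ in $\sigma$, this time $t$ is available on $i$ with respect to its type $T_i$. To prove it I would consider the maximal block of consecutively packed jobs on $i$ containing the job at $t$: release-date-alignment forces this block to begin at an active release date $r \in T_i$ (the machine is idle just before $r$, so the job there starts at its own release date), whence $t = r + \ell p$ for some integer $\ell \ge 0$. If $r'$ is the next active release date of $T_i$ after $r$, then feasibility of $\sigma$ forces the whole block to fit before $r'$ (consecutive jobs start at least $p$ apart), so in particular $t + p \le r'$, which is precisely the availability condition; and if no such $r'$ exists, the condition is vacuous. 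Given this claim, \cref{const:3} follows: the early jobs starting at a fixed time $t$ lie on pairwise distinct machines---no machine starts two jobs at the same time---and each of these machines has $t$ available, so their number is at most the number of machines on which $t$ is available, which equals $x_t$. This yields a feasible MILP solution of objective value $W$.
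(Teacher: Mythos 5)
Your proposal is correct and takes essentially the same route as the paper's proof: normalize the schedule via \cref{lem:releasedates}, read off each machine's set of active release dates to set the integer variables $x_T$, define $x_t$ through Constraints~(\ref{const:2}), and set $x_{j,t}=1$ exactly for the early jobs, yielding objective value $W$. The only difference is that you make explicit, via the maximal-block argument, why a job starting at time $t$ on machine $i$ guarantees that $t$ is available for $i$'s type---a step the paper's verification of Constraints~(\ref{const:3}) leaves implicit---which is a sound and welcome elaboration rather than a different approach.
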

\begin{proof}
    Assume that we are given a feasible schedule $\sigma$ for $I$ such that the weighted number of early jobs is $W$. By \cref{lem:releasedates} we can assume that $\sigma$ is release date aligned.

    We construct a solution for $I'$ as follows. Consider job $j$ and let $\sigma(j)=(i,t)$ such that $t+p\le d_j$, that is, job $j$ is early. We know that $t\in\mathcal{T}$. We set $x_{j,t}=1$ and for all $t'\in\mathcal{T}$ with $t'\neq t$, we set $x_{j,t'}=0$. Note that this guarantees that Constraints~(\ref{const:4}) are fulfilled. Furthermore, assuming we can set the remaining variables to values such that the remaining constraints are fulfilled, we have that the objective value of the solution to $I'$ is $W$.

    In the remainder, we show how to set the remaining variables such that all constraints are fulfilled. Initially, we set all variables $x_T$ to zero. Next, we determine the type of each machine.  Consider machine $i$. Then $T :=\{r\mid \exists j \text{ s.t.\ } \sigma(j)=(i,r_j)\}$ is the type of machine $i$. Then we increase $x_T$ by one. We do this for every machine. Clearly, afterwards Constraint~(\ref{const:1}) are fulfilled.

    Next, we set $x_t:=\sum_T t_T\cdot x_T$, where $t_T=1$ if starting time $t$ is available on a machine with type $T$ and $t_T=0$ otherwise. Clearly, this fulfills Constraints~(\ref{const:2}). It remains to show that Constraints~(\ref{const:3}) are fulfilled.
    So consider some time $t\in \mathcal T$.
    For each job~$j$ starting at time~$t$ on some machine~$i$, we increased $x_T$ by one for some~$T $ with $t_T = 1$ when processing machine $i$.
    Thus, we have $\sum_j x_{j, t} \le \sum_T t_T \cdot x_T = x_t$.
\end{proof}

Before we continue with the other direction of the correctness, we prove that we can apply \cref{lem:MILP} to show that the MILP instance $I'$ admits an optimal solution where all variables are set to integer values.

\begin{lemma}\label{lem:fracTU}
The MILP instance $I'$ admits an optimal solution where all variables are set to integer values.
\end{lemma}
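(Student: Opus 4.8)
We must show that the MILP instance $I'$ (constructed from a \probweighted instance $I$) has an optimal solution in which all variables are integer. By \cref{lem:MILP}, it suffices to exhibit a split of the variables into integer variables and fractional variables so that the submatrix of the constraint matrix acting on the fractional variables is totally unimodular. The integer variables are the type-count variables $x_T$; the fractional variables are the availability variables $x_t$ and the assignment variables $x_{j,t}$.

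**Plan.** The plan is to verify that the columns of the constraint matrix corresponding to the fractional variables $x_t$ and $x_{j,t}$ form a totally unimodular matrix, treating the $x_T$ columns as part of $A_{\text{int}}$ exactly as in the hypothesis of \cref{lem:MILP}. First I would write down which constraints involve the fractional variables. Constraint~(\ref{const:1}) involves only the $x_T$, so it contributes nothing to $A_{\text{frac}}$. Constraints~(\ref{const:2}), when moved to the form $Ax \le b$, read $x_t - \sum_T t_T x_T \le 0$ and $-x_t + \sum_T t_T x_T \le 0$; here the only fractional variable appearing is $x_t$, with coefficient $\pm 1$, so each such row has a single $\pm 1$ entry in the fractional block. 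Constraints~(\ref{const:3}), i.e.\ $\sum_j x_{j,t} - x_t \le 0$, and Constraints~(\ref{const:4}), i.e.\ $\sum_t x_{j,t} \le 1$, involve only the fractional variables $x_{j,t}$ (and $x_t$). The key structural observation I would make is that the variables $x_{j,t}$ together with the constraints~(\ref{const:3}) and~(\ref{const:4}) form the constraint matrix of a bipartite incidence structure: each variable $x_{j,t}$ appears in exactly one constraint of type~(\ref{const:3}) (the one for its time $t$) and exactly one constraint of type~(\ref{const:4}) (the one for its job $j$).

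**Key step.** The core of the argument is a total-unimodularity certification. The cleanest route is to show that every column of $A_{\text{frac}}$ has at most two nonzero entries, all equal to $\pm 1$, and that the rows can be partitioned into two groups such that for any column with two nonzeros of the same sign the two entries lie in different groups, and for any column with two nonzeros of opposite sign they lie in the same group; this is exactly the classical Heller--Tompkins / Ghouila-Houri sufficient condition for total unimodularity. For the $x_{j,t}$ columns, the two nonzeros are a $+1$ in a constraint~(\ref{const:3}) row and a $+1$ in a constraint~(\ref{const:4}) row; I put all~(\ref{const:3})-rows in one group and all~(\ref{const:4})-rows in the other, so same-sign entries are separated as required. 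For the $x_t$ columns, the nonzeros are a $+1$ and a $-1$ in the two rows coming from~(\ref{const:2}) plus a $-1$ in the corresponding~(\ref{const:3}) row; I would handle the extra entry by noting that the equality~(\ref{const:2}) can be used to substitute $x_t$ out entirely, or equivalently treat the two inequality rows from~(\ref{const:2}) as defining $x_t$, after which the reduced fractional matrix is a bipartite incidence matrix and hence totally unimodular by the standard result.

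**Main obstacle.** The main obstacle is the bookkeeping around $x_t$: because $x_t$ appears with a negative sign in~(\ref{const:3}) and is tied to the integer variables through the equality~(\ref{const:2}), a naive column-by-column check does not immediately fit the two-sign pattern. I expect the smoothest resolution is to eliminate $x_t$ using~(\ref{const:2}) before checking total unimodularity, so that the only genuinely fractional block left is the job--time incidence matrix given by~(\ref{const:3})--(\ref{const:4}), which is a network (transportation-type) matrix and thus totally unimodular; the $x_T$-dependent terms then sit entirely in $A_{\text{int}}$, matching the hypothesis of \cref{lem:MILP}. Applying \cref{lem:MILP} with this split yields an optimal all-integer solution, completing the proof.
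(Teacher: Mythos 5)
Your proposal is correct and follows essentially the same route as the paper: both arguments exploit the fact that the equality Constraints~(\ref{const:2}) tie the variables $x_t$ to the integer variables $x_T$ in order to remove the $x_t$ from the fractional block (the paper reclassifies them as integer variables, noting they are forced to integer values anyway, while you substitute them out), leaving exactly the job--time incidence matrix of Constraints~(\ref{const:3}) and~(\ref{const:4}), whose columns each contain two $1$'s split across the two row groups and which is therefore totally unimodular, so \cref{lem:MILP} yields an all-integer optimal solution. The difference between substitution and reclassification is purely cosmetic bookkeeping; the key insight and the final application of \cref{lem:MILP} coincide.
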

\begin{proof}
Notice that since the Constraints~(\ref{const:2}) are equality constraints, we have that in any feasible solution to $I'$, all variables $x_t$ are set to integer values. Hence, $I'$ is equivalent to the MILP~$I''$ arising from~$I'$ by declaring $x_t$ to be integer variables for every $t \in \mathcal{T}$ (in addition to the variables $x_T$), and it suffices to show that $I''$ has an integer solution.

We show that the constraint matrix for the fractional variables $x_{j,t}$ in~$I''$ (i.e., the variables~$x_{j,t}$) is totally unimodular. By \cref{lem:MILP} this implies that $I''$ and therefore also $I'$ admits an optimal solution where all variables are set to integer values.

     Note that the Constraints~(\ref{const:3}) partition the set of fractional variables $x_{j,t}$, that is, each fractional variable is part of exactly one of the Constraints~(\ref{const:3}). The same holds for the Constraints~(\ref{const:4}). Furthermore, the coefficients in the constraint matrix for each variable are either 1 (if they are part of a constraint) or 0. Hence, we have that the constraint matrix is a 0-1 matrix with exactly two 1's in every column. Moreover, in each column, one of the two 1's appears in a row corresponding to the Constraints~(\ref{const:3}) and the other 1 is in a row corresponding to the Constraints~(\ref{const:4}). This is a sufficient condition for the constraint matrix to be totally unimodular~\cite{dantzig1956linear}.
\end{proof}

Now we proceed with showing that if the constructed MILP instance admits an optimal solution that has objective value~$W$, then the original \probweighted instance admits a feasible schedule where the weighted number of early jobs is $W$.

\begin{lemma}\label{lem:MILPcorr2}
    If the MILP instance $I'$ admits an optimal solution that has objective value~$W$, then the \probweighted instance $I$ admits a feasible schedule $\sigma$ where the weighted number of early jobs is $W$. The schedule $\sigma$ can be computed from the optimal solution to $I'$ in polynomial time (in the size of $I'$).
\end{lemma}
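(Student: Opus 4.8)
The plan is to take the integer optimal solution guaranteed by \cref{lem:fracTU} and read a schedule directly off the variables $x_T$ and $x_{j,t}$. First I would fix an optimal solution of $I'$ in which every variable is integer. Since each $x_{j,t}\ge 0$ is integer and $\sum_t x_{j,t}\le 1$ by Constraints~(\ref{const:4}), every $x_{j,t}$ lies in $\{0,1\}$ and each job $j$ has $x_{j,t}=1$ for at most one starting time $t$. I would then \emph{instantiate} the machines: for every type $T$ create $x_T$ machines of type $T$; by Constraint~(\ref{const:1}) this uses at most $m$ machines, and any unused machines are given an arbitrary (empty) type. A machine of type $T$ is declared to offer exactly the starting times that are available on type $T$, namely the $t=r+\ell\cdot p$ with $r\in T$ and $t+p\le r'$, where $r'$ is the smallest release date in $T$ exceeding $r$.

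The key structural fact I would establish is that all starting times available on a single machine of a fixed type $T$ are pairwise non-overlapping, i.e., any two of them differ by at least $p$. For two available times arising from the same $r\in T$ this is immediate, since they differ by a multiple of $p$; for two available times $t_1<t_2$ arising from distinct $r_1<r_2$ in $T$ it follows because availability of $t_1$ forces $t_1+p\le r_1'\le r_2\le t_2$, where $r_1'$ is the next release date of $T$ after $r_1$. Consequently, placing at most one job in each available slot of each machine automatically yields a feasible single-machine schedule.

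With this in hand, I would construct $\sigma$ time-by-time. For a fixed $t\in\mathcal{T}$, the number of instantiated machines offering $t$ equals $\sum_T t_T\cdot x_T = x_t$ by Constraint~(\ref{const:2}), whereas the number of jobs $j$ with $x_{j,t}=1$ is $\sum_j x_{j,t}\le x_t$ by Constraint~(\ref{const:3}). Hence I can injectively assign these jobs to distinct machines offering $t$ and set $\sigma(j)=(i,t)$ accordingly. Carrying this out independently for every $t$ is consistent: each (machine, time) slot receives at most one job, the sets of placed jobs for different $t$ are disjoint (each job is placed at most once by Constraints~(\ref{const:4})), and by the structural fact the slots used on any single machine never overlap, so $\sigma$ is feasible. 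Every placed job is early, because the variable $x_{j,t}$ was only created when $r_j\le t$ and $t+p\le d_j$; thus the placed jobs contribute $\sum_{j,t}w_j\cdot x_{j,t}=W$ to the weighted number of early jobs. Finally I would schedule all remaining jobs at sufficiently late times so that they are tardy and cause no overlap, making the weighted number of early jobs exactly $W$. Every step---instantiating machines, listing available times, and the per-time injective assignment---runs in time polynomial in $|I'|$, giving the claimed running time.

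The main obstacle is the global feasibility check: a priori, resolving the per-time assignments independently could overload some machine. The argument above shows this cannot happen, and it hinges entirely on the non-overlap property of the available times of a single type together with the fact that each job is placed at most once; once this is observed, the construction and its analysis are routine.
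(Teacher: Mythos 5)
Your proposal is correct and follows essentially the same route as the paper's proof: take the all-integer optimal solution from \cref{lem:fracTU}, instantiate $x_T$ machines per type $T$ (at most $m$ by Constraint~(\ref{const:1})), assign the jobs with $x_{j,t}=1$ injectively to the $x_t=|M_t|$ machines offering $t$ via Constraints~(\ref{const:2}) and~(\ref{const:3}), use Constraints~(\ref{const:4}) for disjointness, and schedule leftovers late. The only difference is presentational: you explicitly prove the non-overlap property of the available starting times of a fixed type (which the paper compresses into a single sentence, ``by the definition of available starting times\dots''), which is a welcome elaboration rather than a new idea.
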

\begin{proof}
    Assume we are given an optimal solution for $I'$ that has objective value $W$. By \cref{lem:fracTU} we can assume that the optimal solution sets all variables to integer values. We construct a feasible schedule $\sigma$ as follows.

    First, we assign a type to every machine. Iterate through the types $T$ and, initially, set $i=1$. If $x_T>0$, then assign type $T$ to machines $i$ to $i+x_T-1$. Afterwards, increase $i$ by $x_T$. Since the solution to $I'$ fulfills Constraint~(\ref{const:1}), we know that this procedure does not assign types to more than $m$ machines.

    Now iterate through the relevant starting times $i\in \mathcal{T}$. Let $J_t=\{j\mid x_{j,t}=1\}$ and let $M_t=\{i \mid \text{starting time } t \text{ is available on machine }i\}$. By Constraints~(\ref{const:2}) and~(\ref{const:3}) we know that $|J_t|\le |M_t|=x_t$. Hence, we can create a feasible schedule by setting $\sigma(j)=(i,t)$ for every job $j\in J_t$, where $i\in M_t$ and for all $j,j'\in J_t$ with $j\neq j'$ we have $\sigma(j)=(i,t)$ and $\sigma(j')=(i',t)$ with $i\neq i'$. In other words, we schedule each job in $J_t$ to a distinct machine $i\in M_t$ with starting time $t$. The Constraints~(\ref{const:4}) ensure that we schedule each job at most once. For all jobs $j$ that are not contained in any set $J_t$ with $t\in \mathcal{T}$, we schedule $j$ to an arbitrary machine $i$ to a starting time that is later than $r_j$ and later than the completion time of the last job scheduled on $i$. Clearly, we can compute $\sigma$ in time polynomial in $|I'|$. By the definition of available starting times and the fact that we only create variable $x_{j,t}$ if $t\ge r_j$, this schedule is feasible. 
    
    It remains to show that the weighted number of early jobs is~$W$. To this end, note that we only create variable $x_{j,t}$ if $r_j\le t$ and $t+p\le d_j$. Hence, for each job $j$ with $x_{j,t}=1$ for some $t\in\mathcal{T}$, we know that this job is early in the constructed schedule $\sigma$. It follows that the weighted number of early jobs is $\sum_{j,t} w_j \cdot x_{j,t}$, which equals the maximization objective of~$I$ and hence equals $W$.
\end{proof}

Now we have all the pieces to prove \cref{thm:FPT}.
\begin{proof}[Proof of \cref{thm:FPT}]
Given an instance $I$ of \probweighted we create an MILP instance $I'$ as described above and use \cref{thm:MILP} to solve it. \cref{lem:MILPcorr1,lem:MILPcorr2} show that we can correctly compute an optimal schedule for $I$ from the solution to $I'$ in polynomial time (in the size of $I'$).
\cref{obs:MILPsize} together with \cref{thm:MILP} show that this algorithm has the claimed running time upper-bound.
\end{proof}

\section{Conclusion and Future Work}

In this work, we resolved open questions by Kravchenko and Werner~\cite{kravchenko2011parallel}, Sgall~\cite{Sgall12}, and Mnich and van Bevern~\cite{mnich2018parameterized} by showing that \probunweighted is NP-hard and W[2]-hard when parameterized by the number of machines.
The established hardness of the problem motivates investigating it from the viewpoint of exact parameterized or approximation algorithms.
In this work, we focussed on the former, leaving the latter for future research.
We provided a first step in systematically exploring the parameterized complexity of \probweighted. Our parameterized hardness result shows that the known XP-algorithm for the number of machines as a parameter is optimal from a classification standpoint. Furthermore, we showed that this known algorithm implies that the problem is also contained in XP when parameterized by the processing time, and that it is contained in FPT when parameterized by the combination of the number of machines and the processing time. Finally, we give an FPT-algorithm for \probweighted parameterized by the number of release dates (or due dates). We leave several questions open, the most interesting one is the following. 
\begin{itemize}
    \item Is \probweighted in FPT or W[1]-hard when parameterized by the processing time?
\end{itemize}

Other interesting parameters to consider might be the number of early jobs or the number of tardy jobs. It is easy to see that \probweighted is in XP when parameterized by either one of those parameters, by some simple guess-and-check algorithm (recall that we can check in polynomial time whether all jobs can be scheduled early~\cite{brucker2008scheduling,simons1983multiprocessor,simons1989fast}). Hence, it remains open whether the problem is in FPT or W[1]-hard with respect to those parameters.



\bibliography{bibliography}


\end{document}